\documentclass[12pt]{article}
\usepackage{a4,amsmath,amssymb,amsthm,amscd,cite,graphicx}
\usepackage{verbatim,numprint,siunitx,mathrsfs,esint,xcolor}
\usepackage{rotating}
\usepackage{hyperref} \hypersetup{colorlinks, linkcolor=red }
\usepackage{tikz}
\usetikzlibrary{intersections,calc,matrix,arrows, decorations.markings,shapes}
\usepackage{multirow}
\newtheorem{theorem}{Theorem}[section]
\setcounter{MaxMatrixCols}{26}

\newtheorem{prop}[theorem]{Proposition}

\newtheorem*{T1}{Proposition~\ref{modularpropM24}}

\def\Bbb{\mathbb} \def\BZ{\Bbb Z}

 \newcommand{\be}{\begin{equation}}
  \newcommand{\ee}{\end{equation}}


%
%
\catcode`@=11 \@addtoreset{equation}{section} \catcode`@=12

\begin{document}

\begin{titlepage}
\begin{flushright}
Jan 2021 (v2.1)\\
\texttt{arXiv:2011.07922 [hep-th]}
\end{flushright}
\begin{center}
\textsf{\large Mathieu Moonshine and Siegel Modular Forms}\\[12pt]
Suresh Govindarajan$^{\dagger}$ and Sutapa Samanta$^{*}$ \\
$^\dagger$Department of Physics,
Indian Institute of Technology Madras ,
Chennai 600036 INDIA\\
 $^*$School of Physical Sciences, Indian Association for the Cultivation of Science, Jadavpur, Kolkata 700032, INDIA\\[4pt]
Email: $^\dagger$suresh@physics.iitm.ac.in, $^*$psss2238@iacs.res.in
\end{center}
\begin{abstract}
A second-quantized version of Mathieu moonshine leads to product formulae for functions that are potentially genus-two Siegel Modular Forms analogous to the Igusa Cusp Form. The modularity of these functions do not follow in an obvious manner. For some conjugacy classes, but not all, they match known modular forms.  In this paper,
we express the product formulae for all conjugacy classes of $M_{24}$ in terms of products of standard modular forms. This provides a new proof of their modularity.
\end{abstract}
\end{titlepage}

\section{Introduction}

Modular forms are important ingredients in string theory as well as in mathematical physics. The generating function of a $\frac12$-BPS state of $\mathcal{N}=4$ superstring theory obtained by compactifying the heterotic string on a six-torus is a modular form. Using dualities, the $\mathcal{N}=4$ superstring theory in four dimensions can also be obtained by compactifying type II  string theory on $K3\times T^2$. The generating function of the degeneracy of electrically charged $\tfrac12$-BPS state is given by the twenty fourth power of the Dedekind $\eta$-function\cite{Dijkgraaf:1996it}. When the six-torus is twined by a symmetry element, then the generating function is given by a multiplicative $\eta$ product for a cycle shape that is associated with a conjugacy class of $M_{24}$\cite{Govindarajan:2008vi}. Given a cycle shape $\rho$, the  $\eta$-product associated with the cycle shape are defined by the following map:
$$ \rho = 1^{a_1} 2^{a_2}\cdots n^{a_n} \quad \longrightarrow\quad \eta_\rho (\tau) = \eta(\tau)^{a_1}\eta(2\tau)^{a_2}\cdots \eta(n\tau)^{a_n}\ .$$ A cycle shape is called balanced if there exists a positive integer $M$ such that
$$\rho =1^{a_1} 2^{a_2}\cdots n^{a_n}=\left(\tfrac{M}1\right)^{a_1} \left(\tfrac{M}2\right)^{a_2}\cdots \left(\tfrac{M}n\right)^{a_n}.$$
Conway and Norton \cite{Conway1979} observed that all the conjugacy classes of the Mathieu group $M_{24}$ are given by a balanced cycle shape.  Thus there is a connection between the generating function of the $\frac12$ BPS states and the conjugacy classes of the Mathieu group. This is known as Mathieu moonshine \cite{Dummit:1985, Mason:1985, Govindarajan:2008vi}.

Another Mathieu moonshine was discovered by Eguchi, Ooguri, and Tachikawa \cite{Eguchi:2010ej} that relates the conjugacy class $1a$ of $M_{24}$ with the elliptic genus of $K3$. This elliptic genus is a weight zero and index one Jacobi form. Later, the Jacobi forms for all other conjugacy classes are constructed \cite{Cheng:2010pq, Gaberdiel:2010ch, Eguchi:2010fg}. Gannon proved the existence of a class function that evaluates to all these Jacobi forms\cite{Gannon:2012ck}.  We list the cycle shapes and the conjugacy classes of $M_{24}$ in Appendix \ref{EOT}.

Now, if we consider the $\frac14$-BPS states of the same $\mathcal{N}=4$ superstring theory as mentioned earlier, the generating functions of the degeneracy are genus-two Siegel Modular Forms in some cases\cite{Govindarajan:2008vi}. In \cite{Govindarajan:2008vi}, the authors  proposed the following map relating cycle shapes  to Siegel Modular Forms:
\[
\rho=1^{a_1} 2^{a_2}\cdots n^{a_n} \quad \longrightarrow\quad \Phi_k^{\rho}(\mathbf{Z})\ ,
\]
where $\Phi_k^{\rho}(\mathbf{Z})$ is a Siegel Modular Form with weight $k=(-2+\sum_i a_i)$ of a suitable subgroup of $Sp(4,\mathbb{Z})$.
For conjugacy class $1a$, this is the well-known weight ten Igusa cusp form\cite{Dijkgraaf:1996it}.  The two Mathieu moonshines were combined in a second-quantized version leading to  class function that implied a product formula for $\Phi_k^{\rho}(\mathbf{Z})$ for all conjugacy classes for $M_{24}$ \cite{Govindarajan:2011em}. The proof that these product formulae are indeed Siegel Modular Forms for all conjugacy classes first appeared in \cite{Persson:2013xpa}. It was also proven for a sub-class that appear in the context of $L_2(11)$ moonshine in \cite{Govindarajan:2018ted}. The square-root of some of these Siegel Modular Forms also appear in the denominator formula of Borcherds-Kac-Moody Lie super-algebra \cite{Govindarajan:2008vi, Govindarajan:2009qt, Govindarajan:2010fu, Govindarajan:2018ted, Govindarajan:2019ezd}.

In this paper, we provide a new and distinct proof that the $\Phi_k^{\rho}(\mathbf{Z})$ are Siegel Modular Forms. The various connections are described in Fig. \ref{M24Moonshine}. 
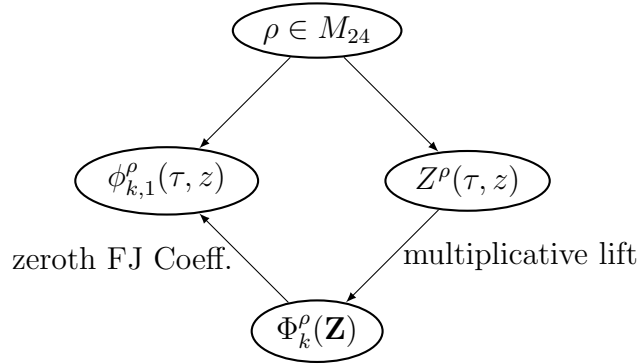
\begin{figure}[ht]
\centering
\begin{tikzpicture}
\draw (0,0)node[ellipse, thick, inner sep=2pt,draw](y1){$\rho\in M_{24}$};
\draw (2,-2)node[ellipse, thick, inner sep=2pt,draw](y2){$Z^{\rho}(\tau,z)$};
\draw[-latex](y1)--(y2);
\draw (-2,-2)node[ellipse, thick, inner sep=2pt,draw](y3){$\phi_{k,1}^{\rho}(\tau,z)$};
\draw[-latex](y1)--(y3);
\draw (0,-4)node[ellipse, thick, inner sep=2pt,draw](y4){$\Phi_{k}^{\rho}(\mathbf{Z})$};
\draw[-latex](y2) to node[right]{multiplicative lift}(y4);
\draw[-latex](y4) to node [left]{zeroth FJ Coeff.}(y3);
\end{tikzpicture}
\caption{Moonshine for $M_{24}$. $\phi_{k,1}^{\rho}(\tau,z)$ is constructed from the multiplicative $\eta$-products and $Z^{\rho}(\tau,z)$ is the twined elliptic genus. $\Phi_{k}^{\rho}(\mathbf{Z})$ is the genus-two Siegel Modular Form.}
\label{M24Moonshine}
\end{figure}

 \noindent The main result of this paper is the following proposition  that we prove. 
\begin{T1}
 For $g\in M_{24}$, let $\rho_m=[g^m]$, $\rho=\rho_1$, $Z_{0,1}^{\rho_m}(\tau,z)$ be the Jacobi form of weight zero and index one corresponding to $g^m$ and $\Phi_k^{\rho}(\mathbf{Z})$ be the modular form defined by the multiplicative lift given in Eq. \eqref{productformulamain}. Then
\begin{equation*}
\Big(\Phi_k^{\rho}(\mathbf{Z})\Big)^{p_{\rho}} = \prod_{m|p_\rho} \left(B_{\frac{p_\rho}m Z^{\rho_m}}(m\,\mathbf{Z})\right) \ ,
\end{equation*}
where $p_{\rho}$ is the length of the shortest cycle in the cycle shape for the conjugacy class $\rho$ and $B_\psi(\mathbf{Z})$ is the Siegel Modular Form given by Theorem \ref{CGproduct}.
\end{T1}
\noindent \textbf{Remark:} Our proof differs from the one by Persson-Volpato\cite{Persson:2013xpa} in that we show that the product formula for the conjugacy class of an order $N$ element of $M_{24}$ correspond to Siegel Modular Forms at level $N$ of $Sp(4,\BZ)$ with character corresponding to a $p_\rho$-th root of unity, where $p_\rho$ is as defined above. In \cite{Persson:2013xpa}, it was shown that these products correspond to subgroup at higher level $p_\rho N$.\footnote{We thank Daniel Persson and Roberto Volpato for a useful email correspondence.} Thus our result complements theirs. Our proof also explicitly connects to Borcherds products that appear in number theory. In the process we obtain a simple looking formula that holds for all conjugacy classes.

After the introductory section, in Section 2, we discuss the ideas leading to  Proposition \ref{modularpropM24} and a sketch of its proof using two illustrative examples. Section 3 summarises our results and concluding remarks. Appendix A provides the definitions and examples of modular forms and Jacobi forms. Appendix B lists the Jacobi forms for all conjugacy classes of $M_{24}$. A fairly long appendix D provides the details of the proof of Proposition \ref{modularpropM24}.

\section{Construction of Siegel Modular Forms for $M_{24}$}

In this section, we construct genus two Siegel Modular Forms for all the conjugacy classes of $M_{24}$. We know that there's a weight zero and index one Jacobi form (see Appendix \ref{EOT}) \cite{Eguchi:2010ej, Cheng:2010pq, Gaberdiel:2010ch, Eguchi:2010fg} for every conjugacy class of $M_{24}$. We apply the multiplicative lift to them to construct the Siegel Modular Forms. This multiplicative lift naturally leads to a product formula. We prove the modularity of the construction in two ways -- (i) one is to construct the sum side using the additive lift and showing that this is equal to the product formula, (ii) the other is to compare the product formula with the Borcherds products. The first method works in most cases, while the second method works for all the cases.

\subsection{The multiplicative lift}

The Mathieu moonshine leads to a Siegel Modular Form which is denoted by $\Phi_k^{\rho}(\mathbf{Z})$. The formula for $\Phi_k^{\rho}(\mathbf{Z})$ is given by \cite{Govindarajan:2011em}
\begin{align}\label{multiplicativelift}
\Phi_k^{\rho} = s\, \phi^{\rho}_{k,1}\times \exp\left(-\sum_{m=1}^\infty s^m V_m \cdot Z_{0,1}^{\rho}(\tau, z)\right)\ ,
\end{align}
where $V_m$ is the generalized Hecke-like operator. This operator acts on a weight zero and index one Jacobi form as follows
\begin{align}\label{gen_hecke}
V_m \cdot Z_{0,1}^{\rho} (\tau,z) \equiv \frac1m\sum_{ad=m}\sum_{b=0}^{d-1} Z_{0,1}^{\rho_a}\left( \frac{a\tau+b}{d}, az\right)\ .
\end{align}
where  $\rho$ denotes the conjugacy class of the element $g$ of order $N$. Then $\rho_a$ denotes the conjugacy class of the element $g^a$. We set $\rho_0=1^{24}$. The Fourier-Jacobi coefficients $f_a(n,\ell)$ of $Z^{\rho_a}_{0,1}(\tau,z)$ is defined as follows:
\begin{align}
Z_{0,1}^{\rho_a}(\tau,z) =\sum_{n=0}^\infty \sum_{\ell\in\mathbb{Z}} f_a(n,\ell) q^n r^\ell\ .
\end{align}
Let us now define $c^\alpha(n,\ell)$ via following discrete Fourier transform.
\begin{align}\label{FJcoeff}
f_a(n,\ell) = \sum_{\alpha=0}^{N-1}\left( \omega_{N}^{a} \right)^\alpha c^\alpha(n,\ell)\ ,
\end{align}
where $\omega_{N}=\exp(2\pi i /N)$.
The inverse Fourier transform of Eq. \eqref{FJcoeff} is given by
\begin{align}\label{inverseFJ}
c^\alpha = \frac1N \sum_{a=0}^{N-1}  \exp(-2\pi i a \alpha/N) f_a\ .
\end{align}
Using \eqref{gen_hecke}, we obtain a product formula for \eqref{multiplicativelift} as follows:
\begin{align}\label{productformulamain} 
\Phi_k^{\rho}(\mathbf{Z})= s\, \phi_{k,1}^{\rho} \times \prod_{m=1}^{\infty} \prod_{\alpha=0}^{N-1}\prod_{n=0}^\infty \prod_{\substack{\ell\in \mathbb{Z}\\ 4nm-\ell^2\ge0}} \left(1-\omega_n^\alpha q^{n}r^{\ell} s^{m}\right)^{c^\alpha(nm,\ell)}\ ,
\end{align}
where the zeroth Fourier-Jacobi term, i.e., the coefficient of $s^1$ in the Fourier expansion is given by
$$\phi_{k,1}^{\rho}(\tau,z) =\frac{\vartheta_1(\tau,z)^2}{\eta(\tau)^6}\ \eta_\rho(\tau)\ .$$
\textbf{Remark:} Let $p_\rho$ (resp. $N_\rho$) be the length of the smallest (resp. largest) cycle in $\rho$. From the result of Cheng and Duncan\cite{Cheng:2011}, one has that $\eta_\rho(\tau)^{p_\rho}$ (and hence $[\phi_{k,1}^{\rho}(\tau,z)]^{p_\rho}$) is a modular (Jacobi) form for the subgroup $\Gamma_0(N_\rho)$ of $SL(2,\BZ)$. We thus anticipate that $[\Phi_k^{\rho}(\mathbf{Z})]^{p_\rho}$ will be a Siegel Modular Form for a level $N_\rho$ subgroup of $Sp(4,\BZ)$.

For the cases when the order of $g (\in M_{24})$ is prime (i.e., $N=2,3,5,7,11,23$), one has the conjugacy class $g^a$ for $a\ne 0$ mod $N$ is the same as that of $g$. Thus, one has $\rho_a=\rho$ for $a\ne0$ mod $N$. Now using Eq.~\eqref{inverseFJ} we get
$$ c^0 = \frac1N(f_0 + (N-1)f_1)\hspace{0.3cm}\text{and}\hspace{0.3cm} c^1= \frac1N(f_0-f_1)\ .$$
For these cases, on using the product representation
for the theta and $\eta$-functions that appear in $\phi^\rho_{k,1}(\tau,z)$ and using $\prod_{n=0}^{N-1} (1- \omega^n x) = (1-x^N)$, the formula given in Eq. \eqref{productformulamain} simplifies to
\begin{align}
\Phi_k^{\rho} &= qrs\prod_{(n,\ell,m)>0} \Big(1-q^n r^\ell s^m\Big)^{c^0(nm,\ell) - c^1(nm,\ell)} \Big(1-q^{nN} r^{\ell N} s^{m N}\Big)^{c^1(nm,\ell)}\ ,\\
&=qrs\prod_{(n,\ell,m)>0} \Big(1-q^n r^\ell s^m\Big)^{f_1(nm,\ell)} \Big(1-q^{nN} r^{\ell N} s^{mN}\Big)^{\frac{f_0(nm,\ell)- f_1(nm,\ell)}N}\ ,
\end{align}
where $(n,\ell,m)>0$ implies $n>0,$ or $n=0$ and $m>0$, or $n=m=0$ and $\ell<0$.
We will treat the cases of composite $N$ separately.

\subsection{The additive lift}

The another construction of $\Phi^{\rho}(\mathbf{Z})$ is via additive lift. The additive seed is the weak Jacobi form of weight $k$ and index $t$ for the sub-group $\Gamma_0(Nq,q)$. This index $t$ can be integral or half integral. Let us consider a Jacobi form of weight $k$ and index $t$, $\phi_{k,t} (\tau,z) $. Then for $k>0$, the additive lift is given by
\begin{align}\label{additivelift}
\Phi^{\rho}(\mathbf{Z}) = \sum_{m=1\mod q} s^{mt}\phi^\rho|_{k,t} T_-(m)(\tau,z)\ ,
\end{align}
where $T_-(m)$ is the Hecke-like operator introduced by Cl\'ery and Gritsenko \cite{Gritsenko:2008} and is given by
\begin{align}\label{CGHecke}
T_-(m) := \sum_{\substack{ad=m\\ (a,N)=1\\ a>0}} \Gamma_1(Nq,q)\, \sigma_a \begin{pmatrix} a & bq\\ 0&d\end{pmatrix}\ ,
\end{align}
where $\sigma_a = \left(\begin{smallmatrix} a^{-1} & 0\\ 0 & a\end{smallmatrix}\right) \mod Nq$ and $(m,q)=1$. For all $M_{24}$ conjugacy classes ($\rho$), we have $t=1$ and $q=1$.

\subsection{Modularity by comparing with the additive side}

The additive seed for these cases are Jacobi forms of weight $k$ and index one,
\begin{align}
\phi_{k,1}^\rho (\tau,z) = \frac{\vartheta_1(\tau,z)^2}{\eta(\tau)^6}\ \eta_\rho (\tau) = \sum_{n,\ell} a(n,\ell) q^{n} r^\ell\ .
\end{align}
A necessary  but not sufficient condition for the compatibility of the additive lift with the multiplicative lift is:
\begin{align}\label{compat}
\left.\left[ \frac{\theta_1(\tau,z)^2}{\eta(\tau)^6} \eta_\rho (\tau)\right]\right|_{k,1} T_-(2)(\tau,z) =- Z^\rho (\tau,z) \left[ \frac{\theta_1(\tau,z)^2}{\eta(\tau)^6} \eta_\rho (\tau)\right]\ .
\end{align}
This is the coefficient of $s$ in both the lifts, i.e., the ones given in Eq. \eqref{multiplicativelift} and Eq. \eqref{additivelift}. This condition holds for all the cycle shapes of $M_{24}$ except for the following eight cycle shapes where we observe that \cite{Eguchi:2011aj}
\begin{align}
&-\frac{T_2 \phi^{1^2 11^2}(\tau,z)}{\phi^{1^2 11^2}(\tau,z)} -Z^{1^2 11^2} (\tau,z)= \frac{11}{2}\phi^{1^2 11^2}(\tau,z) \ ,\nonumber\\
&-\frac{T_2 \phi^{1^12^1 7^1 14^1}(\tau,z)}{\phi^{1^12^1 7^1 14^1}(\tau,z)} - Z^{1^12^1 7^1 14^1} (\tau,z)=7\phi^{1^12^1 7^1 14^1}(\tau,z) \ ,\nonumber\\
&-\frac{T_2 \phi^{1^13^1 5^1 15^1}(\tau,z)}{\phi^{1^13^1 5^1 15^1}(\tau,z)} - Z^{1^13^1 5^1 15^1} (\tau,z)=\frac{15}2\phi^{1^13^1 5^1 15^1}(\tau,z) \ ,\\
&-\frac{T_2 \phi^{2^2 10^2}(\tau,z)}{\phi^{2^2 10^2}(\tau,z)} - Z^{2^2 10^2} (\tau,z)=10\phi^{2^2 10^2}(\tau,z) \ ,\nonumber\\
&-\frac{T_2 \phi^{2^1 4^1 6^1 12^1}(\tau,z)}{\phi^{2^1 4^1 6^1 12^1}(\tau,z)} - Z^{2^1 4^1 6^1 12^1} (\tau,z)=12\phi^{2^1 4^1 6^1 12^1}(\tau,z) \ , \nonumber
\end{align}
and 
\begin{align}
&-\frac{T_2 \phi^{4^6}(\tau,z)}{\phi^{4^6}(\tau,z)} - Z^{4^6} (\tau,z)=16\frac{\eta(2\tau)^4 \eta(8\tau)^4}{\eta(4\tau)^4}\frac{\theta_1(\tau,z)^2}{\eta(\tau)^6} \ ,\nonumber\\
&-\frac{T_2 \phi^{3^8}(\tau,z)}{\phi^{3^8}(\tau,z)} - Z^{3^8} (\tau,z)=18\frac{\eta(\tau)^3 \eta(9\tau)^3}{\eta(3\tau)^2}\frac{\theta_1(\tau,z)^2}{\eta(\tau)^6} \ ,\\
&-\frac{T_2 \phi^{6^4}(\tau,z)}{\phi^{6^4}(\tau,z)} - Z^{6^4} (\tau,z)= 2\left(\eta_{6^4}(\tau)+ \frac{\eta(\tau)^3 \eta(9\tau)^3}{\eta(3\tau)^2}+ 6 \frac{\eta(2\tau)^3 \eta(18\tau)^3}{\eta(6\tau)^2}\right.\nonumber\\
&\hspace{6cm}\left. + \frac{\eta(4\tau)^3 \eta(36\tau)^3}{\eta(12\tau)^2}+\cdots\right)\frac{\theta_1(\tau,z)^2}{\eta(\tau)^6} \ ,\nonumber
\end{align}
where $T_2\phi^\rho$ is short for $\phi^\rho|_{k,1} T_-(2)$. All the above Jacobi forms except the ones correspond to $3^8$ and $4^6$ are of weight $k=0$ and we have na\"ively applied a formula which assumes $k>0$. 

Even for the cases for which the compatibility condition Eq. \eqref{compat} holds, we need to prove that all other terms also match. We do not pursue this method here. We prove the modularity using another method which we discuss next.

\subsection{Modularity by comparing with a Borcherds formula}

We begin with a theorem due to Cl\'ery-Gritsenko (see also \cite{GritsenkoNikulinI,GritsenkoNikulinII,Aoki:2005}) that leads to a Borcherds product formula for a meromorphic Siegel Modular Form starting from a nearly holomorphic Jacobi form of weight zero and index $t$.

\begin{theorem}[Cl\'ery-Gritsenko\cite{Gritsenko:2008}] \label{CGproduct} Let $\psi$ be a nearly holomorphic Jacobi form of weight $0$ and index $t$ of $\Gamma_0(N)$. Assume that for all cusps of $\Gamma_0(N)$ one has $\frac{h_e}{N_e} c_{f/e}(n,\ell)\in \mathbb{Z}$ if $4nt -\ell^2\leq 0$. Then the product
\[
B_\psi(\mathbf{Z}) = q^A r^B s^C \prod_{f/e\in \mathcal{P}} \prod_{\substack{n,\ell,m\in \mathbb{Z}\\ (n,\ell,m)>0}} \Big(1-(q^nr^\ell s^{tm})^{N_e}\Big)^{\frac{h_e}{N_e} c_{f/e}(nm,\ell)}\ ,
\]
with
\[
A=\frac1{24} \sum_{\substack{f/e\in \mathcal{P}\\ \ell \in \mathbb{Z}}} h_e\, c_{f/e}(0,\ell) ,\
B=\frac1{2} \sum_{\substack{f/e\in \mathcal{P}\\ \ell \in \mathbb{Z}_{>0}}} \ell h_e\, c_{f/e}(0,\ell),\
C=\frac1{4} \sum_{\substack{f/e\in \mathcal{P}\\ \ell \in \mathbb{Z}}} \ell^2 h_e\, c_{f/e}(0,\ell)\ ,
\]
defines a meromorphic Siegel Modular Form of weight
\[
k= \frac1{2} \sum_{\substack{f/e\in \mathcal{P}\\ \ell \in \mathbb{Z}}} \frac{h_e}{N_e} c_{f/e}(0,0)
\]
with respect to $\mathbf{\Gamma}_t(N)^+$ possibly with character. The character is determined by the zeroth Fourier-Jacobi coefficient of $B_\psi(\mathbf{Z})$, i.e., the coefficient of $s^C$ in the Fourier expansion, which is a Jacobi form of weight $k$ and index $C$ of the Jacobi subgroup of $\mathbf{\Gamma}_t(N)^+$.
\end{theorem}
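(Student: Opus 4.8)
The plan is to realize $B_\psi(\mathbf{Z})$ as an instance of Borcherds' singular theta lift, using the exceptional isomorphism between the orthogonal group $O(2,3)$ and (a central extension of) $Sp(4)$ that identifies the Siegel upper half space $\mathbb{H}_2$ with a three-dimensional bounded symmetric domain of orthogonal type. Under this dictionary, Siegel modular forms for $\mathbf{\Gamma}_t(N)^+$ correspond to orthogonal modular forms for an arithmetic subgroup of $O^+(L)$, where $L$ is a lattice of signature $(2,3)$ such as $U \oplus U(N) \oplus \langle -2t\rangle$, the level $N$ and the index $t$ being encoded in the two rightmost summands. The strategy is then to convert $\psi$ into the vector-valued input required by the lift, apply the lift, and translate the resulting orthogonal Borcherds product back into the Siegel variable $\mathbf{Z}$.

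First I would apply the theta decomposition to $\psi$. A nearly holomorphic Jacobi form of weight $0$ and index $t$ decomposes as $\psi(\tau,z) = \sum_{\mu \bmod 2t} h_\mu(\tau)\,\vartheta_\mu(\tau,z)$, where the components $h_\mu$ assemble into a weakly holomorphic vector-valued modular form $F$ of weight $-\tfrac12 = 1-\tfrac32$ for the Weil representation attached to the discriminant form $\mathbb{Z}/2t\mathbb{Z}$ -- precisely the input weight demanded by the lift for $O(2,3)$. Because $\psi$ is a form for $\Gamma_0(N)$ rather than the full modular group, $F$ transforms under a congruence subgroup, and its expansions at the various cusps $f/e \in \mathcal{P}$ of $\Gamma_0(N)$ are exactly the data $c_{f/e}(n,\ell)$ appearing in the statement; the cusp invariants $h_e$ and $N_e$ are the width and the ramification order that convert each local expansion into the Weil-representation normalization, so that the product over $f/e\in\mathcal{P}$ is the sum over these cusp contributions.

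Next I would feed $F$ into the singular theta lift. Borcherds' theorem, in the level-$N$ form of Gritsenko-Nikulin \cite{GritsenkoNikulinI, GritsenkoNikulinII} and Cl\'ery-Gritsenko \cite{Gritsenko:2008} (cf. also \cite{Aoki:2005}), produces a meromorphic function $\Psi_F$ on the symmetric domain that (i) converges for large imaginary part by the standard growth estimate on $c_{f/e}(n,\ell)$ and has the infinite product dictated by the positive Fourier coefficients of $F$, (ii) has weight equal to half the cusp-summed constant term, giving $k=\tfrac12\sum \tfrac{h_e}{N_e}c_{f/e}(0,0)$, (iii) has divisor supported on the rational quadratic (Humbert) divisors determined by the principal part $c_{f/e}(n,\ell)$ with $4nt-\ell^2<0$, and (iv) is modular for $O^+(L)$ up to a multiplier. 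Translating each Gram-matrix pairing into the coordinates $(q,r,s)$ via the $O(2,3)\cong Sp(4)$ dictionary turns the product over positive lattice vectors into the product over $(n,\ell,m)>0$, with the factor $N_e$ appearing inside $\big(1-(q^n r^\ell s^{tm})^{N_e}\big)$ because the level-$N$ embedding rescales the corresponding primitive vectors. The leading monomial $q^A r^B s^C$ is the exponentiated Weyl vector of the product; evaluating the standard Weyl-vector formula against the constant terms $c_{f/e}(0,\ell)$ reproduces the stated $A$, $B$, and $C$.

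The main obstacle will be the level structure: packaging the multi-cusp data so that the single orthogonal Borcherds product is genuinely invariant under the full extended group $\mathbf{\Gamma}_t(N)^+$ -- including the Fricke and Atkin-Lehner involutions that permute the cusps $f/e \in \mathcal{P}$ -- and with an honest character rather than a wider multiplier system. This is exactly where the integrality hypothesis $\tfrac{h_e}{N_e}c_{f/e}(n,\ell)\in\mathbb{Z}$ for $4nt-\ell^2\le 0$ enters: it guarantees that the exponents are integers so the product is single valued, and that the multiplier of $\Psi_F$ restricts to a finite-order character. I would close the argument by computing the zeroth Fourier-Jacobi coefficient -- the coefficient of $s^C$, a Jacobi form of weight $k$ and index $C$ for the Jacobi subgroup -- and reading off the character from its transformation under the generators of $\mathbf{\Gamma}_t(N)^+$, which by general principles pins down the character uniquely.
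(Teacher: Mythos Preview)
The paper does not prove Theorem~\ref{CGproduct}. It is stated with attribution to Cl\'ery--Gritsenko~\cite{Gritsenko:2008} (with antecedents in \cite{GritsenkoNikulinI,GritsenkoNikulinII,Aoki:2005}) and then used as a black box: everything that follows, including the Remark on relaxing the integrality condition and the definition of the projection $\pi_{FE}$, is preparation for \emph{applying} the theorem, not for proving it. So there is no in-paper proof to compare your proposal against.

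For what it is worth, your outline is the standard route and is essentially what the cited source does: realize $\mathbf{\Gamma}_t(N)^+$ as (the image of) the orthogonal group of a signature $(2,3)$ lattice such as $U\oplus U(N)\oplus\langle-2t\rangle$, convert the level-$N$ Jacobi form $\psi$ via the theta decomposition into a vector-valued form for the Weil representation of the discriminant group, and run Borcherds' singular theta lift. The product over cusps $f/e\in\mathcal{P}$ with the factors $h_e/N_e$ and the rescaling $(q^nr^\ell s^{tm})^{N_e}$ is exactly how the level-$N$ data reorganizes under that dictionary, and the formulas for $A,B,C,k$ are the Weyl-vector and constant-term outputs. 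The one place where your sketch is genuinely schematic is the identification of the full automorphic group with $\mathbf{\Gamma}_t(N)^+$ (including the $V_t$ extension) rather than a smaller congruence subgroup, and the verification that the multiplier is a finite-order character determined by the zeroth Fourier--Jacobi coefficient; in the original this is done by checking on explicit generators of the paramodular group.
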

\noindent Here $\mathcal{P}$ are the set of cusps for $\Gamma_0(N)$, $h_e$ is the width of the cusp at $f/e\in \mathcal{P}$ and $N_e=N/e$. For any congruence subgroup $\Gamma$, the cusps are defined as the $\Gamma$-equivalent classes of $\mathbb{Q}\cup \{\infty\}$ where $\mathbb{Q}$ denotes the set of rational numbers. The width of the cusp $\alpha \in \mathcal{P}$ is the minimal $h_e$ such that $(\begin{smallmatrix} 1 & h_e\\0 &1\end{smallmatrix}) \in \gamma^{-1} \Gamma \gamma$, where  $\gamma\in SL(2,\mathbb{Z})$ such that $\gamma(\infty) = \alpha$.

\noindent \textbf{Remark:} As discussed in \cite{Govindarajan:2018ted},
the condition  on $c_{f/e}(n,\ell)$ for $4nt -\ell^2\le0$ that $\frac{h_e}{N_e} c_{f/e}(n,\ell)\in \mathbb{Z}$ 
will be relaxed as follows. For all cusps that have identical values of $(N_e,h_e)$, we require that the sum of $\frac{h_e}{N_e} c_{f/e}(n,\ell)$ (with $4nt-\ell^2\le0$) for all such cusps be integral. This ensures  that $B_\psi$ has only zeros or poles at all divisors -- this is the reason for the condition in the theorem.

Define the projection\cite{Raum:2012}, $\pi_{FE}$, as follows
\begin{equation}
\pi_{FE} \left(\phi |M_{f/e}\right) (\tau,z) := \frac1{h_e} \sum_{b=0}^{h_e-1} \phi |M_{f/e} (\tau +b,z)\ ,
\end{equation}
where $M_{f/e} = \left(\begin{smallmatrix} f & * \\ e & *\end{smallmatrix}\right)\in SL(2,\mathbb{Z} )$ maps the cusp at $i\infty$ to $f/e$. It is the Fourier coefficients of the projected Jacobi form defined above that appears in the product formula.

Raum proved the modularity of the products such as those given in Eq. \eqref{productformulamain} by considering products of rescaled Borcherds products\cite{Raum:2012}. His results were however restricted to only conjugacy classes associated with elements of $M_{24}$ that were primes or powers of primes. Further, there were some computational errors in his work that we fix as well.  We not only extend his results but also provide a systematic method of obtaining the precise rescaled Borcherds products that are needed for all conjugacy classes. For all type-I conjugacy classes of $M_{24}$, we find that the product formula is equivalent to a single Borcherds formula. More generally, we find that the following holds.
\begin{prop} \label{modularpropM24}
 For $g\in M_{24}$, let $\rho_m=[g^m]$, $\rho=\rho_1$, $Z_{0,1}^{\rho_m}(\tau,z)$ be the Jacobi form of weight zero and index one corresponding to $g^m$ and $\Phi_k^{\rho}(\mathbf{Z})$ be the modular form defined by the multiplicative lift given in Eq. \eqref{productformulamain}. Then
\begin{equation*}
\Big(\Phi_k^{\rho}(\mathbf{Z})\Big)^{p_{\rho}} = \prod_{m|p_\rho} \left(B_{\frac{p_\rho}m Z^{\rho_m}}(m\,\mathbf{Z})\right) \ ,
\end{equation*}
where $p_{\rho}$ is the length of the shortest cycle in the cycle shape for the conjugacy class $\rho$ and $B_\psi(\mathbf{Z})$ is the Siegel Modular Form given by Theorem \ref{CGproduct}.
\end{prop}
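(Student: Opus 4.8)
The plan is to prove the identity by writing both sides as explicit infinite products of Borcherds type and matching, monomial by monomial, the exponent of every linear factor $1-\omega_N^\alpha\,q^nr^\ell s^w$ (writing $w$ for the $s$-power and reserving $m$ for the divisor in the statement), together with the leading prefactors. First I would collapse the left-hand side to a logarithmic expansion. Starting from \eqref{productformulamain} and using the discrete Fourier relation \eqref{FJcoeff}, the inner product $\prod_{\alpha=0}^{N-1}\big(1-\omega_N^\alpha\,q^nr^\ell s^w\big)^{c^\alpha(nw,\ell)}$ expands under $\log$ into $-\sum_{j\ge1}\tfrac1j f_j(nw,\ell)\,(q^nr^\ell s^w)^j$, since $\sum_\alpha c^\alpha\,\omega_N^{j\alpha}=f_j$ is periodic in $j$ modulo $N$. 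This recovers the exponential form \eqref{multiplicativelift} built from the generalized Hecke operators of \eqref{gen_hecke}, and shows that in $\big(\Phi_k^\rho(\mathbf{Z})\big)^{p_\rho}$ the factor $1-\omega_N^\alpha\,q^nr^\ell s^w$ carries exponent $p_\rho\,c^\alpha(nw,\ell)$. The prefactor contributes $(s\,\phi_{k,1}^\rho)^{p_\rho}$, which by the Cheng--Duncan result quoted in the Remark after \eqref{productformulamain} is a genuine Jacobi form because $\eta_\rho^{p_\rho}$ is modular for $\Gamma_0(N_\rho)$.

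Next I would expand the right-hand side. For each $m\mid p_\rho$, Theorem \ref{CGproduct} applied to the seed $\tfrac{p_\rho}{m}Z^{\rho_m}$ on $\Gamma_0(N_m)$ (with $N_m$ the order of $g^m$ and index $t=1$), followed by the rescaling $\mathbf{Z}\to m\mathbf{Z}$ that sends $q,r,s\to q^m,r^m,s^m$, produces one factor $\big(1-(q^{mn}r^{m\ell}s^{mw})^{N_e}\big)^{\frac{h_e}{N_e}\frac{p_\rho}{m}c_{f/e}(nw,\ell)}$ for each cusp $f/e$ of $\Gamma_0(N_m)$. The essential input is the modular behaviour of the twined genera: under $M_{f/e}$ mapping $i\infty$ to $f/e$, one has $Z^{\rho_m}|M_{f/e}=(\text{multiplier})\cdot Z^{\rho_{m'}}$ for a power $m'$ fixed by $e$ and the Atkin--Lehner data, so that the projected coefficients $c_{f/e}$ defined through $\pi_{FE}$ can be re-expressed in terms of the $f_a$ (equivalently the $c^\alpha$). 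I would also verify, using the relaxed integrality condition in the Remark after Theorem \ref{CGproduct}, that each $B_{\frac{p_\rho}{m}Z^{\rho_m}}$ is genuinely holomorphic, so that the product over $m\mid p_\rho$ has the correct zero/pole divisor.

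The matching itself uses the cyclotomic identity $\prod_\beta(1-\omega^\beta x)=1-x^{N_e}$ to split each Borcherds factor $1-(q^{mn}r^{m\ell}s^{mw})^{N_e}$ into linear factors $1-\omega_{N_e}^\beta\,q^{mn}r^{m\ell}s^{mw}$, which are exactly the factors on the left with $\omega_N^\alpha=\omega_{N_e}^\beta$ and with exponent triple divisible by $m$. Passing to logarithms and comparing the coefficient of a fixed monomial $q^Xr^Ys^W$, the equality reduces to one arithmetic identity: the accumulated datum $\sum_{m\mid p_\rho}\sum_{f/e}\sum_{j}\tfrac{h_e}{mN_ej}\,c_{f/e}(\cdots)$ must reproduce the left-hand datum $\sum_{j\mid\gcd(X,Y,W)}\tfrac1j f_j(\cdots)$. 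The divisor $m$ stratifies the monomials by the common factor of their exponents, and $p_\rho$ is precisely the period for which all $p_\rho\,c^\alpha$ become integers and the character closes up to a $p_\rho$-th root of unity, matching the Remark after the statement.

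The main obstacle is establishing this cusp-to-Fourier dictionary precisely: for every cusp of every $\Gamma_0(N_m)$, identifying which twined class $\rho_{m'}$ the projection $\pi_{FE}$ selects and pinning down the exact width $h_e$, the factor $N_e$, and the multiplier phase. This is routine for prime $N$, where a single cyclotomic collapse does all the work and only the cusps $0$ and $i\infty$ occur; but for composite $N$ the number of cusps and the interplay of their widths with the divisors $m\mid p_\rho$ make the combinatorics delicate. I expect the bulk of the work, carried out case by case over the conjugacy classes of $M_{24}$, to lie precisely in verifying this identity together with the attendant integrality, and thereby in fixing the character.
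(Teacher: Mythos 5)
Your overall architecture is the same as the paper's: expand both sides as Borcherds-type products, apply Theorem \ref{CGproduct} once for each divisor $m\mid p_\rho$, and reduce the identity to a cusp-by-cusp dictionary expressing the projected coefficients $\pi_{FE}\big(Z^{\rho_m}|M_{f/e}\big)$ in terms of the $f_a$, verified by exhaustion over conjugacy classes; your logarithmic reformulation of the left side is correct. However, there is a concrete error that would derail the right-hand side in exactly the cases where the proposition has content ($p_\rho\neq 1$): you apply Theorem \ref{CGproduct} to the seed $\tfrac{p_\rho}{m}Z^{\rho_m}$ ``on $\Gamma_0(N_m)$ with $N_m$ the order of $g^m$''. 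For type-II classes the twining genus is \emph{not} a Jacobi form of $\Gamma_0(\mathrm{ord}(g^m))$; it is modular only at the larger level $p_{\rho_m}N_{\rho_m}$ (smallest times largest cycle length). For example, $Z^{2^{12}}=2\,\eta(\tau)^8\eta(2\tau)^{-4}\,\phi_{-2,1}(\tau,z)$ is a Jacobi form of $\Gamma_0(4)$, not of $\Gamma_0(2)$; likewise $Z^{3^8}$ lives at level $9$, $Z^{6^4}$ at level $36$, and $Z^{12^2}$ at level $144$. At your claimed level the hypothesis of Theorem \ref{CGproduct} fails, and -- more fatally -- the cusp set is wrong: the cusp $1/2$ of $\Gamma_0(4)$, whose projection gives $\pi_{FE}\big(Z^{2^{12}}|M_{\frac12}\big)=-Z^{2^{12}}$ and hence the factor $\big(1-(q^nr^\ell s^w)^2\big)^{-\frac12 f_1}$ in $B_{Z^{2^{12}}}$, has no counterpart among the two cusps $i\infty$ and $0$ of $\Gamma_0(2)$. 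Without such contributions the product over $m\mid p_\rho$ cannot reproduce the left-hand side, so the matching step fails for every type-II class.

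A secondary inaccuracy: the claim $Z^{\rho_m}|M_{f/e}=(\text{multiplier})\cdot Z^{\rho_{m'}}$ is false in general. The slash of a twining genus at a cusp is a twisted-twining genus; only after the projection $\pi_{FE}$ (and, for cusps sharing the same $(h_e,N_e)$, after summing over those cusps) does one land back in the span of the $Z^{\rho_a}$, and then generically as a linear combination rather than a single multiple -- e.g.\ $\pi_{FE}\big(Z^{1^42^24^4}|M_{\frac12}\big)=Z^{1^82^8}-Z^{1^42^24^4}$, and $\pi_{FE}(Z^\rho|S)=\tfrac1N\big(Z^{1^{24}}-Z^\rho\big)$ for prime $N$. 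Your fallback statement, that the projected coefficients can be written in terms of the $f_a$, is the correct one, but it is precisely the content that must be computed case by case with the correct levels, widths, and cusp pairings; once the levels are fixed as above, your plan coincides with the proof in the paper.
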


\begin{proof}
As the proof is mostly done by exhaustion, we illustrate this theorem for two examples, one with $p_\rho=1$ when $N$ is prime and one with $p_\rho\ne1$ here. All other conjugacy classes are worked out in the Appendix \ref{proofModularity}. 

\

\noindent $\boxed{\mathbf{p_{\rho}=1}}$\\

\noindent For all type-I conjugacy classes of $M_{24}$ i.e with $p_{\rho}=1$, we see that the multiplicative lift can be expressed in terms single Borcherds product. It is very easy to see for all the conjugacy classes with $N=1,2,3,5,7,11,23$ where $N$ is the order of the element $g\in M_{24}$. For prime $N$, there are only two cusps, one is at $i\infty$ (which is $\Gamma_0(N)$ equivalent to $1/N$) and another is at $0/1$. To prove the equality of multiplicative lift and the Borcherds product, we need to show that
\begin{align}
\pi_{FE} (Z^\rho) &= Z^\rho\ ,\\
\pi_{FE} (Z^\rho |S) &= \frac1N \left( Z^{1^{24}}- Z^\rho\right)\ . \label{check2}
\end{align}
The first equation holds trivially since $Z^{\rho} $ has only integral powers of $q$ in its Fourier-Jacobi expansion. The second part follows from a calculation.
\[
\pi_{FE}(Z^\rho|S)(\tau,z) = \frac{2}{N+1} \phi_{0,1} (\tau,z) + \pi_{FE}(\alpha^{(N)}|S)(\tau) \phi_{-2,1}(\tau,z)\ ,
\]
where $\alpha^{(N)}(\tau)=\frac{2N}{N+1} E_2^{(N)}(\tau)$ for $N=2,3,5,7$.
For $N=11$ and $23$, one has
\begin{align*}
\alpha^{(11)}(\tau)&=\frac{11}{6} E_2^{(11)}(\tau)-\frac{22}5 \eta_{1^211^2}(\tau)\ , \\
\alpha^{(23)}(\tau) &= \tfrac{23}{12} E_2^{(23)}(\tau) - \tfrac{23}{22} f_{23,1}(\tau) -\tfrac{161}{22} f_{23,2}(\tau) \ ,
\end{align*}
where $f_{23,a}(\tau)$ are defined in Appendix \ref{a1}.
For $N=2,3,5,7$, computing $\pi_{FE}\left(\alpha^{(N)}|S\right)(\tau)$, we obtain
\[
\pi_{FE}(\alpha^{(N)}|S)(\tau) = -\tfrac{2}{N(N+1)} \sum_{b=0}^{N-1} E_2^{(N)}(\tfrac{\tau+b}N)= -\tfrac{2}{(N+1)} E_2^{(N)}\big|_2U_N = -\tfrac{2}{(N+1)} E_2^{(N)}(\tau)\ ,
\]
where $U_N$ is the Hecke operator for $\Gamma_0(N)$ (defined by Atkin and Lehner~\cite{Atkin:1970}) and $E_2^{(N)}$ is its eigenform with eigenvalue $+1$.
Thus, we get
\begin{align*}
\pi_{FE}(Z^\rho|S)(\tau,z) &= \frac{2}{N+1} \phi_{0,1} (\tau,z) -\frac1N \alpha^{(N)}(\tau)\phi_{-2,1}(\tau,z)\\
&= \frac{2}{N} \phi_{0,1} (\tau,z) -\frac1N Z^\rho(\tau,z)\\
&= \frac1N \left(Z^{1^{24}}-Z^\rho\right)(\tau,z)\ ,
\end{align*}
which establishes Eq. \eqref{check2} for prime $N=2,3,5,7$. A similar computation holds for $N=11$ and $23$. For $N=11$,  one can show that $$\pi_{FE}(\alpha^{(11)}|S)(\tau)=-(1/11)\alpha^{(11)}(\tau)\ ,$$
and for $N=23$, one can show that
$$\pi_{FE}(\alpha^{(23)}|S)(\tau)=-(1/23)\alpha^{(23)}(\tau)\ .$$
Thus Eq. \eqref{check2} holds for $N=11$ and $23$ as well.\\ 

\noindent $\boxed{\mathbf{p_{\rho}\ne1}}$\\

\noindent Let us consider the cycle shape $12^2$. The multiplicative lift leads to the product formula given by (data about the cusps are given in \ref{Details12squared})
\begin{multline}\label{multiplicative12b}
\Phi^{12^2}_{-1} (\mathbf{Z}) = qrs\prod_{(n,\ell,m)>0} \Big(1-q^n r^\ell s^m \Big)^{f_1(nm,\ell)} \Big(1-(q^n r^\ell s^m)^2 \Big)^{x_1} \Big(1-(q^n r^\ell s^m)^3 \Big)^{x_2} \\ \Big(1-(q^n r^\ell s^m)^4\Big)^{x_3}
\Big(1-(q^n r^\ell s^m)^6 \Big)^{x_4} \Big(1-(q^n r^\ell s^m)^{12}\Big)^{x_5}\ ,
\end{multline}
where
\begin{align*}
&x_1=\frac12(f_2(nm,\ell)-f_1(nm,\ell)), \hspace{0.5cm} x_2=\frac13(f_3(nm,\ell)-f_1(nm,\ell)),\\
& x_3= \frac14 (- f_2(nm,\ell) + f_4(nm,\ell)),\\
&x_4=\frac13(f_1(nm,\ell)-f_2(nm,\ell)-f_3(nm,\ell)+f_6(nm,\ell)),\\
&x_5=\frac1{12} (f_0(nm,\ell) +f_2(nm,\ell) - f_4(nm,\ell) - f_6(nm,\ell)) \ .
\end{align*}
Now to see if this can be written as the Borcherds product as stated in Theorem \ref{CGproduct} we need to compute the coefficients at different cusps.
We need to combine cusps with same $(h_e, N_e)$ value and one can show that the \textit{only} non-zero contributions are the following:
\begin{align*}
\pi_{FE} \Big( Z^{12^2}|M_{\frac1{72}} \Big) &= - Z^{12^2} ,\\
\pi_{FE} \Big( Z^{ 12^2}|M_{\frac5{48}} +Z^{ 12^2}|M_{\frac1{48}} \Big) &= - Z^{12^2} ,\\
\pi_{FE} \Big( Z^{ 12^2}|M_{\frac5{24}} +Z^{ 12^2}|M_{\frac1{24}} \Big) &= Z^{12^2} \ .
\end{align*}
Using these and the product formula for Borcherds product given by Cl\'ery- Gritsenko, we get
\begin{multline}
B_{Z^{ 12^2}}(\mathbf{Z})= \prod_{(n,\ell,m)>0} \Big(1- q^n r^\ell s^m\Big)^{f_1(nm,\ell)} \Big(1- (q^n r^\ell s^m)^2\Big)^{-\frac12 f_1(nm,\ell)} \\
\Big(1- (q^n r^\ell s^m)^3\Big)^{-\frac13 f_1(nm,\ell) } \Big(1- (q^n r^\ell s^m)^6\Big)^{\frac16 f_1(nm,\ell)} \ .
\end{multline}
This does not give all the terms that appear in the product form given in Eq. \eqref{multiplicative12b}. The missing terms can be accounted for by additional terms leading to
\begin{align}
\Big( \Phi_{-1}^{ 12^2}(\mathbf{Z})\Big)^{12} = B_{12Z^{ 12^2}}(\mathbf{Z}) B_{6Z^{ 6^4}}(2\mathbf{Z}) B_{4Z^{4^6}}(3\mathbf{Z}) B_{3Z^{3^8}}(4\mathbf{Z}) B_{2Z^{2^{12}}}(6\mathbf{Z}) B_{Z^{1^{24}}}(12\mathbf{Z})\ .
\end{align}
$\Big( \Phi_{-1}^{ 12^2}(\mathbf{Z})\Big)^{12} $ is a meromorphic Siegel Modular Form of weight $(-12)$ at level 24. The zeroth Fourier-Jacobi term is given by the additive seed $\phi_{-1,1}(\tau,z) = \frac{\theta_1(\tau,z)^2}{\eta(\tau)^6} \eta(12\tau)^2$.
\end{proof}

\section{Conclusions}
We proved that there is a genus-two Siegel Modular Form (of a level $N$ subgroup of $Sp(2,\BZ)$) for every conjugacy class of $M_{24}$. Each of these arise on evaluating the class function for $M_{24}$ in Eq. \eqref{multiplicativelift} on the conjugacy class. This was achieved by expressing a product formula implied by Mathieu moonshine as rescaled product of the Borcherds products for all conjugacy classes. 
This  makes their modularity manifest. This completes the proof that was initiated by Raum\cite{Raum:2012} as well as resolving puzzles raised in \cite{Govindarajan:2011em}. Our result also corrects Raum's incorrect results for the cycle shapes $2^{12}, 3^8, 2^4 4^4, 4^6$, and $1^2 2^1 4^1 8^2$.

Generalized Mathieu moonshine deals with pairs of commuting elements of $(g,h)\in M_{24}$. To each pair, we have an eta product as well as a Jacobi form\cite{Gaberdiel:2012gf}.  Again, one has a product formula implied by second-quantized moonshine\cite{Govindarajan:2011em,Persson:2013xpa}. Their modularity has been proved in \cite{Persson:2013xpa}. Our methods can be used to provide  explicit formulae similar to the ones obtained in this paper.

Umbral moonshine is a generalization of Mathieu moonshine\cite{Cheng:2012tq,Cheng:2013wca}. These connect finite groups, vector valued mock modular forms to the Niemeier lattices in 24 dimensions. In a recent paper~\cite{Govindarajan:2019ezd}, we have shown that there exist Siegel Modular Forms for the $A$-series of umbral moonshine. It is of interest if such Siegel Modular Forms appear for all other cases.

It has been shown that  there is no point in the space of CFTs associated with the K3 sigma model where $M_{24}$ appears as a symmetry\cite{Gaberdiel:2011fg}. The best known example obtains $\mathbb{Z}^8_2$ : $\mathbb{M}_{20}$ symmetry\cite{Gaberdiel:2013psa}. There has been an interesting attempt to combine symmetries  at different points in the CFT moduli space and obtain all of $M_{24}$\cite{Taormina:2011rr}. It remains to be seen if there is a natural construction of a $M_{24}$ modules that leads to the class functions associated with the Jacobi and Siegel Modular Forms. \\

\noindent \textbf{Acknowledgments:} We thank Mohammed Shabbir and S. Viwanath for useful discussions. SS was supported by the Ramanujan Fellowship of Ayan Mukhopadhyay (IIT Madras) when some of this work was carried.

\appendix

\section{The modular group and congruence subgroups}

\textbf{Modular group: } It is a group of all fractional linear transformations on the complex upper half plane, i.e, the transformations $z\mapsto \frac{az+b}{cz+d}$, such that $a,b,c,$ and $d$ are all integers and obey $ad-bc=1$. 

These transformations are isomorphic to the group $PSL(2,\mathbb{Z})=SL(2,\mathbb{Z})/\{\pm\mathbf{1}\}$. The projective special linear group which we denote by $\Gamma$, is defined as follows:
$$\Gamma=\Big\{\left(\begin{smallmatrix} a&b \\c&d\end{smallmatrix}\right)| a,b,c,d\in \mathbb{Z}, ad-bc=1 \Big\}\ .$$
The complex upper half plane is defined as
$$\mathbb{H}_1=\{z\in\mathbb{C}; \text{Im } (z)>0\}\ .$$
The modular group is generated by following two transformation:
\begin{align*}
&T=\left(\begin{smallmatrix} 1&1\\0&1\end{smallmatrix}\right)\, \text{ such that } T\cdot z = z+1\\
&S=\left(\begin{smallmatrix} 1&-1\\1&0\end{smallmatrix}\right)\, \text{ such that } S\cdot z = -\frac1z
\end{align*}
Now we define various subgroups of $SL(2,\mathbb{Z})$ which will be useful for later discussions. These subgroups are obtained by imposing certain constrains on the matrix elements and are called congruence subgroups.
For an integer $N>1$, the {\it principal congruence subgroup} of level $N$ is defined as follows:
$$ \Gamma(N)=\Big\{ \left(\begin{smallmatrix} a&b\\c&d\end{smallmatrix}\right)\in SL(2,\mathbb{Z})| \left(\begin{smallmatrix} a&b\\c&d\end{smallmatrix}\right) \equiv \left(\begin{smallmatrix} 1&0\\0&1\end{smallmatrix}\right)\mod N\Big\} \ .$$
Another subgroup known as {\it Hecke congruence subgroup} of level $N$ denoted by $\Gamma_0(N)$, is defined as 
 $$ \Gamma_0(N)=\Big\{\left( \begin{smallmatrix} a&b\\c&d\end{smallmatrix}\right)\in SL(2,\mathbb{Z})| \left(\begin{smallmatrix} a&b\\c&d\end{smallmatrix}\right)\equiv \left(\begin{smallmatrix} *&*\\0&*\end{smallmatrix}\right)\mod N\Big\} \ .$$

\noindent\textbf{Symplectic group: } The symplectic group, $Sp(2,\mathbb{Q})$ is a set of $4\times 4$ matrices written in terms of four $2\times 2 $ matrices $A,B,C,D$ (with $A,B,C,D$ taking values in $\mathbb{Q}$) as $M=\left(\begin{smallmatrix} A&B\\C&D\end{smallmatrix}\right)$ satisfying $AB^T =BA^T$, $CD^T = DC^T$ and $AD^T-BC^T=I$. This group naturally acts on the Siegel upper half space $\mathbb{H}_2$, as
\begin{align}
\mathbf{Z} =\left(\begin{smallmatrix} \tau & z\\ z&\tau'\end{smallmatrix}\right) \longmapsto M\cdot\mathbf{Z} \equiv (A\mathbf{Z}+B)(C\mathbf{Z}+D)^{-1}\ .
\end{align}
\noindent The Siegel upper half space is defined as follows:
\begin{align}
\mathbb{H}_2 =\Big\{\mathbf{Z} =\left(\begin{smallmatrix} \tau& z\\z&\tau'\end{smallmatrix}\right)\in M_2(\mathbb{C}),\  \text{Im}(\mathbf{Z})>0\Big\}\ .
\end{align}
The paramodular group at level $N$, $\mathbf{\Gamma}_t(N)$, is a subgroup of $Sp(2,\mathbb{Q})$ is defined as follows:
\begin{align}
\mathbf{\Gamma}_t(N)=\left\{\left(\begin{smallmatrix} * & *t & * & *\\ * &*& *& *t^{-1}\\*N & *Nt & * & *\\ * Nt & *Nt &*t &*\end{smallmatrix}\right)\in Sp(2,\mathbb{Q}), \text{ with } N,t\in\mathbb{Z}_{>0}\text{ and } *\in\mathbb{Z}\right\}\ .
\end{align}
The $t=1, N=1$ case of the general paramodular group $\mathbf{\Gamma}_t(N)$ is the usual symplectic group over integers,  i.e., $\mathbf{\Gamma}_1=Sp(2,\mathbb{Z})$. The embedding of 
$\gamma=\left(\begin{smallmatrix} a & b \\ c & d\end{smallmatrix}\right)
\in \Gamma_0(N)$ in $\Gamma_t(N)$ is given by
\begin{equation}
\label{sl2embed}
\widetilde{\gamma}=\widetilde{\begin{pmatrix} a & b \\ c & d \end{pmatrix}}
\equiv \begin{pmatrix}
   a   &  0 & b & 0   \\
     0 & 1 & 0 & 0 \\
     c &  0 & d & 0 \\
     0 & 0 & 0 & 1  
\end{pmatrix}
\ , \ c=0 \mod N \ .
\end{equation}

The normal double extension of $\mathbf{\Gamma}_t(N)$ in $Sp(2,\mathbb{R})$ is defined as $\mathbf{\Gamma}_t(N)^+=\mathbf{\Gamma}_t(N)\cup \mathbf{\Gamma}_t(N)V_t$ where 
\begin{align}
V_t = \frac1{\sqrt{t}}\left(\begin{smallmatrix}0& t & 0 & 0\\ 1 &0& 0& 0\\0 & 0 & 0 & 1\\ 0 &0 & t &0\end{smallmatrix}\right)\in Sp(2,\mathbb{R})\ .
\end{align}
As shown by Cl\'ery and Gritsenko \cite{Clery2008}, this $\mathbf{\Gamma}_t(N)^+$ is generated by $V_t$ and by its parabolic subgroup
\begin{align}
\mathbf{\Gamma}_t^\infty(N) =\left\{ \pm \left(\begin{smallmatrix} * & 0 & * & *\\ * &1& *& *t^{-1}\\*N & 0 & * & *\\ 0 & 0 &0 &1\end{smallmatrix}\right)\in \Gamma_t(N),\ \forall *\in\mathbb{Z} \right\}\ .
\end{align}

\section{Modular forms and Jacobi forms}
\textbf{Modular form:} A modular form of weight $k$, with $k$ being integer or half integer, and character $v$ with respect to any of the subgroups of the modular group $\Gamma$ is a holomorphic function $f:\mathbb{H}_1 \rightarrow \mathbb{C}$ satisfying,
$$( f|_k M)(z) = v(M) f(z)\ ,$$
where $M=\left(\begin{smallmatrix}a&b\\c&d\end{smallmatrix}\right)\in \Gamma$ and $z\in \mathbb{H}_1$ and the slash operation is given by
$$ ( f|_k M)(z) =(c\tau+d)^{-k} f(M\cdot z)\ .$$

\

\noindent\textbf{Jacobi form:} A holomorphic function $\phi_{k,m}(\tau,z): \mathbb{H}_1\times \mathbb{C}\rightarrow \mathbb{C}$ is called a Jacobi form of weight $k$ and index $m$ if the function
$$\widetilde{\phi}_k (\mathbf{Z}) = \exp(2\pi i m\tau')\phi_{k,m}(\tau,z)\ ,$$
on $\mathbb{H}_2$ such that $\widetilde{\phi}_k (\mathbf{Z}) $ is a modular form of weight $k$ with respect to the Jacobi group $\mathbf{\Gamma}^J(N)\subseteq Sp(2,\mathbb{Z})$ with character $v$, i.e., it satisfies
\begin{equation}
\widetilde\phi|_k\ M (\mathbf{Z}) = v(M)\ \widetilde\phi_k(\mathbf{Z})\hspace{1cm} \forall M\in \Gamma^J(N)\ ,
\end{equation}
and it is holomorphic at all cusps, i.e.,
let $\gamma\in SL(2,\mathbb{Z})$ and $\tilde{\gamma}$, its embedding in $Sp(2,\mathbb{Z})$ as given in Eq. \eqref{sl2embed}, then it has a Fourier expansion
\begin{equation}
\widetilde\phi|_{k} \tilde{\gamma} (\mathbf{Z}) = s^m\ \sum_{\substack{n,\ell\\ 4nm- \ell^2\ge0}}\ c_\gamma(n,\ell)\ q^n r^\ell 
\end{equation}
where $q= e^{2\pi i \tau}$, $r=e^{2\pi i z}$ and $s= e^{2\pi i \tau'}$ , $n,\ell\in\mathbb{Q}$. \\

\noindent\textbf{Siegel Modular Form:} A Siegel Modular Form of weight $k$ for the subgroup $\mathbf{\Gamma}_t(N)$ with a character $v$ is a holomorphic function $F:\mathbb{H}_2\rightarrow \mathbb{C}$ which satisfies
\begin{align}
(F|_k M)(\mathbf{Z}) = v(M) F(\mathbf{Z})\ ,
\end{align}
for all $M= \left(\begin{smallmatrix}A & B\\ C & D\end{smallmatrix}\right)\in \mathbf{\Gamma}_t(N)$.

\noindent Here $|_k$ is the standard slash operator on the space of functions on $\mathbb{H}_2:$
\begin{align}\label{slash}
(F|_k M)(\mathbf{Z}) := \det(C\mathbf{Z}+D)^{-k} F(M\cdot\mathbf{Z})\ .
\end{align}

\subsection{Examples of Modular and Jacobi forms}\label{a1}

We discuss some examples of modular and Jacobi forms that enter our discussions. The Dedekind eta function is defined by
\[
\eta(\tau) = q^{1/24} \prod_{m=1}^{\infty} (1-q^m)\ .
\]
Its modular properties are given in Eq. \eqref{etatransform}.  $E_2^*(\tau)$ is the weight two non-holomorphic modular form of $SL(2,\mathbb{Z})$. It is given by
\begin{equation}
E_{2}^*(\tau)=1-24\ \sum_{n=1}^\infty \sigma_{1}(n)\ q^n\   -\frac{3}{\pi\ \textrm{Im} \tau}\ ,
\end{equation}
where $\sigma_{\ell}(n)=\sum_{1\leq d|n} d^\ell$. Then for $N>1$,
\begin{equation}
E_2^{(N)}(\tau):=\frac1{N-1}\Big(N E_2^*(N\tau)-E_2^*(\tau)\Big)
\end{equation}
is a weight two holomorphic modular form of $\Gamma_0(N)$ with constant coefficient equal to $1$. For $\Gamma_0(23)$, we need the following  weight-two modular forms that we denote by $f_{23,1}(\tau)$ and 
$f_{23,2}(\tau)$. The first few terms in their $q$-series are\cite{Gaberdiel:2010ch}
\begin{align*}
f_{23,1}(\tau)&= 2q- q^2-q^4-2q^5-5q^6+2q^7+4q^9+6q^{10} -6q^{11}+\cdots\ ,\\
f_{23,2}(\tau)&= q^2-2q^3-q^4+2q^5+q^6+2q^7-2q^8-2q^{10} -2q^{11}+\cdots\ .
\end{align*}

For $a.b\in (0,1)\mod 2$, the genus-one theta functions are defined by
\begin{equation}
\theta\left[\genfrac{}{}{0pt}{}{a}{b}\right] \left(\tau,z\right)
=\sum_{l \in \BZ} 
q^{\frac12 (l+\frac{a}2)^2}\ 
r^{(l+\frac{a}2)}\ e^{i\pi lb}\ ,
\end{equation}
with $r=\exp(2\pi i z)$.
Let $\vartheta_1 
\left(\tau,z\right)\equiv\theta\left[\genfrac{}{}{0pt}{}{1}{1}\right] 
\left(\tau,z\right)$, $\vartheta_2 
\left(\tau,z\right)\equiv\theta\left[\genfrac{}{}{0pt}{}{1}{0}\right] 
\left(\tau,z\right)$, $\vartheta_3 
\left(\tau,z\right)\equiv\theta\left[\genfrac{}{}{0pt}{}{0}{0}\right] 
\left(\tau,z\right)$ and $\vartheta_4 
\left(\tau,z\right)\equiv\theta\left[\genfrac{}{}{0pt}{}{0}{1}\right] 
\left(\tau,z\right)$. Using these, we can construct two index one Jacobi forms:
\begin{align}
\phi_{0,1}(\tau,z) &= 4\sum_{a=3}^4 \left[\frac{\vartheta_a(\tau,z)}{\vartheta_a(\tau,0)}\right]^2\ ,\\
\phi_{-2,1}(\tau,z) &=\frac{\vartheta_1(\tau,z)^2}{\eta(\tau)^6}\ .
\end{align}
Any Jacobi form, $Z_{0,1}(\tau,z)$, of weight zero and index one of $\Gamma_0(N)$ ($N>1$) can be written as follows\cite{Aoki:2005}
\[
Z_{0,1}(\tau,z) = A\ \phi_{0,1}(\tau,z) + B(\tau)\ \phi_{-2,1}(\tau,z) \ ,
\]
where $A$ is a constant and $B(\tau)$ is a weight two modular form of $\Gamma_0(N)$.

\section{List of $M_{24}$ Jacobi forms}\label{EOT}
There are 26 conjugacy classes of $M_{24}$. The ATLAS nomenclature of the conjugacy classes are as follows:
\begin{align*}
&\text{Type I : } 1a, 2a, 3a, 5a, 4b, 7a, 7b, 8a, 6a, 11a, 15a, 15b, 14a, 14b, 23a, 23b,\\
&\text{Type II : } 12b, 6b, 4c, 3b, 2b, 10a, 21a, 21b, 4a, 12a.
\end{align*}
The cycle shapes associated with type I conjugacy classes have at least one one-cycle.
\begin{table}[ht!]
\setlength{\extrarowheight}{3pt}
\begin{center}
\footnotesize{
\begin{tabular}{c|c|p{10cm}}
\hline
Conj. class & $\rho$ & \hspace{2in}$Z^\rho(\tau,z)$ \\ \hline
1a & $1^{24}$ & 2 $\phi_{0,1}(\tau,z)$\phantom{\Big|} \\[2pt]
2a & $1^{8}2^8$ & $\frac23\phi_{0,1}(\tau,z)+\frac43E_2^{(2)}(\tau)\phi_{-2,1}(\tau,z)$ \\[2pt]
3a & $1^63^6$ & $\tfrac12 \phi_{0,1}(\tau,z) +\tfrac32 E^{(3)}_2(\tau)\ \phi_{-2,1}(\tau,z) $ \\[2pt]
5a & $1^45^4$ & $\tfrac13 \phi_{0,1}(\tau,z) +\tfrac53 E^{(5)}_2(\tau)\ \phi_{-2,1}(\tau,z) $ \\[2pt]
4b & $1^42^2 4^4$& $\tfrac13 \phi_{0,1}(\tau,z) +(-\tfrac13 E^{(2)}_2(\tau)+2 E^{(4)}_2(\tau))\ \phi_{-2,1}(\tau,z) $ \\[2pt]
7a/b & $1^37^3$& $\frac14\phi_{0,1}(\tau,z)+\frac74E_2^{(7)}(\tau) \phi_{-2,1}(\tau,z) $ \\[2pt]
8a &$1^2 2^1 4^1 8^2$ & $\tfrac16 \phi_{0,1}(\tau,z) +(-\tfrac12 E^{(4)}_2(\tau)+\tfrac73 E^{(8)}_2(\tau))\ \phi_{-2,1}(\tau,z) $ \\[2pt]
6a & $1^22^23^26^2$ & $\tfrac16 \phi_{0,1}(\tau,z) +(-\tfrac16 E^{(2)}_2(\tau)-\tfrac12 E^{(3)}_2(\tau)+\tfrac52 E^{(6)}_2(\tau))\ \phi_{-2,1}(\tau,z) $ \\[2pt]
11a &$1^2 11^2$ & $\tfrac16 \phi_{0,1}(\tau,z) + (\tfrac{11}6 E^{(11)}_2(\tau) - \tfrac{22}5 \eta_{1^211^2}(\tau))\ \phi_{-2,1}(\tau,z) $ \\[2pt]
15a/b &$1^1 3^1 5^1 15^1$ & $\tfrac1{12} \phi_{0,1}(\tau,z) + (-\tfrac1{16} E^{(3)}_2(\tau) - \tfrac5{24}E_2^{(5)}(\tau)\newline+\tfrac{35}{16}E_2^{(15)}(\tau)-\frac{15}4 \eta_{1^13^15^115^1}(\tau))\ \phi_{-2,1}(\tau,z) $ \\[2pt]
14a/b &$1^1 2^1 7^1 14^1$ & $\tfrac1{12} \phi_{0,1}(\tau,z) + (-\tfrac1{36} E^{(2)}_2(\tau) - \tfrac7{12}E_2^{(7)}(\tau)\newline+\tfrac{91}{36}E_2^{(14)}(\tau)-\frac{14}3 \eta_{1^12^17^114^1}(\tau))\ \phi_{-2,1}(\tau,z) $ \\[2pt]
23a/b &$1^1 23^1$ & $\tfrac1{12} \phi_{0,1}(\tau,z) + (\tfrac{23}{12} E^{(23)}_2(\tau) - \tfrac{23}{22} f_{23,1}(\tau)\newline-\tfrac{161}{22}f_{23,2}(\tau))\ \phi_{-2,1}(\tau,z) $ \\[3pt]
\hline
12b & $12^2$ & $2\,\eta_{ 1^4 2^{-1}4^16^{1}12^{-1}}(\tau)\ \phi_{-2,1}(\tau,z) $ \\[2pt]
6b & $6^4$ & $2\, \eta_{ 1^2 2^23^{2}6^{-2}}(\tau)\ \phi_{-2,1}(\tau,z) $ \\[2pt]
4c & $4^6$ & $2\, \eta_{ 1^4 2^2 4^{-2}}(\tau)\ \phi_{-2,1}(\tau,z) $ \\[2pt]
3b & $3^{8}$ & $2\, \eta_{1^6 3^{-2}}(\tau)\ \phi_{-2,1}(\tau,z) $ \\[3pt]
2b & $2^{12}$ & $2\, \eta_{1^8 2^{-4}}(\tau)\ \phi_{-2,1}(\tau,z) $ \\[3pt]
10a & $2^2 10^2$ & $2\,\eta_{ 1^3 2^1 5^1 10^{-1}}(\tau)\ \phi_{-2,1}(\tau,z) $ \\[2pt]
21a/b & $3^1 21^1$ & $\left(\frac73 \eta_{ 1^3 3^{-1} 7^3 21^{-1}}(\tau)- \frac13 \eta_{1^6 3^{-2}}(\tau)\right)\ \phi_{-2,1}(\tau,z) $ \\[2pt]
4a & $2^4 4^4$ & $2\,\eta_{ 2^{8}4^{-4}}(\tau)\ \phi_{-2,1}(\tau,z) $ \\[2pt]
12a & $2^14^16^1 12^1$ &2$\, \eta_{ 1^3 2^{-1}3^{-1}4^26^{3}12^{-2}}(\tau)\ \phi_{-2,1}(\tau,z) $\\[2pt] \hline
\end{tabular}
}
\end{center}
\caption{The Jacobi forms for all conjugacy classes of $M_{24}$.}\label{twining_generaM24}
\end{table}
\clearpage

\section{Rules for computing cusps }\label{rulesofcusps}

The $S$ and $T$-transformation of $\eta$-function with simple arguments are as follows (see Chapter 2 of \cite{Peter2000})
\begin{align}
&\eta(N\tau)|_S = \tfrac1{\sqrt{N}}e^{-i\pi/4} \eta\left(\tfrac{\tau}N\right), \nonumber \\
&\left.\eta\left(\frac{\tau}N\right)\right|_S = \sqrt{N} e^{-i\pi/4}\eta(N\tau), \label{etatransform}\\
&\eta(\tau+a) = e^{i\pi a/12} \eta(\tau)\ . \nonumber
\end{align}
For more complicated arguments of $\eta$-function, let $\psi_j := \eta\left(\frac{\tau+j}{N}+j\right)$. Consider the following cases
\begin{enumerate}
\item If $(j,N)=1$, then choose a $j'$ such that $jj'=N+1$ we have,
\begin{align}\label{s_transformation}
\psi_j|_S(\tau) = \text{sign}(-j') e^{-3\pi i/4} \psi_{-j'}(\tau)\ .
\end{align}

\item  $(j,N)\ne 1$, but a simple change in variable of $\tau$ leads to the condition $(j,N)=1 $. In that case, we make the necessary change in the variable.  Then we choose a $j'$ such that $jj'=N+1$. An example of such case is the following.
\begin{align*}
\left.\eta\left(\frac{\tau-3}{6}\right)\right|_S = \tau^{-1/2} \eta\left(\frac{-\frac1{\tau}-3}{6}\right) = \tau^{-1/2}\eta\left(\frac{-\frac1{3\tau}-1}{2}\right) = \tau^{-1/2} \eta\left(\frac{-1-\tilde\tau}{2\tilde\tau}\right)\ ,
\end{align*}
where $\tilde\tau = 3\tau$. Now let us consider $\psi^{(2)}_{-1}(\tilde\tau) = \eta\left(\frac{\tilde\tau-1}{2}-1\right)$ and do the $S$-transformation.
\begin{multline*}
\left.\eta\left(\frac{\tilde\tau-1}{2}-1\right)\right|_S = e^{-i\pi/12} \left.\eta\left(\frac{\tilde\tau-1}{2}\right)\right|_S = e^{-i\pi/12} \tilde\tau^{-1/2} \eta\left( \frac{-\frac{1}{\tilde\tau} -1 }{2} \right)\\ = e^{-i\pi/12} \tilde\tau^{-1/2} \eta\left(\frac{-1-\tilde\tau}{2\tilde\tau}\right)\ .
\end{multline*}
Again using Eq. \eqref{s_transformation} we get
\begin{align*}
\left.\eta\left(\frac{\tau-1}{2}-1\right)\right|_S = e^{-3\pi i/4} \eta\left(\frac{\tau+3}{2}+3\right) \ .
\end{align*}
Comparing these two and putting $\tilde\tau= 3\tau$ back we get
\begin{align*}
\left.\eta\left(\frac{\tau-3}{6}\right)\right|_S = \sqrt{3} e^{-i\pi/3} \eta\left(\frac{3\tau+1}{2}\right)\ .
\end{align*}

\item  $(j,N)\ne 1$, and there exists no change of variable in $\tau$ that can lead us to the condition $(j,N)=1$. In that case, we need to use a more general formula. Given $\psi_j(\tau)$, choose $j'$ such that $jj' = 1$ mod $N$ and
$$ G =\begin{pmatrix} j & \frac{jj'-1}{N}\\ N & j'\end{pmatrix} \in \Gamma_0(N)\ .$$
Let $G = S T^a S T^b S T^c S$, then
$$ \psi_j \Big| _{S}(\tau) = e^{-\pi i} e^{(j+j'+a+b+c)\pi i/12} \, i^J \, \psi_{-j'}(\tau)\ ,$$
where $J$ is the number of changes of sign in the sequence $1, c, bc-1, abc-a-c = N$.
\end{enumerate}

\section{Proving Proposition \ref{modularpropM24} }\label{proofModularity}

\subsection{ Type-I conjugacy classes}

\subsubsection{Cycle shape $\mathbf{1^4\cdot 2^2\cdot 4^4}$}

The twining genera $Z^{\rho_a}(\tau,z),\, a=0,\ldots 3 $, which contribute to the multiplicative lifts are the one corresponding to conjugacy class $4b$ and its various cusps.
\begin{align*}
&Z^{\rho_0} (\tau,z)= Z^{1^{24}}(\tau,z),\\
&Z^{\rho}(\tau,z)=Z^{\rho_3} (\tau,z)= Z^{1^4 2^2 4^4}(\tau,z),\\
&Z^{\rho_2}(\tau,z)= Z^{1^8 2^8}(\tau,z)\ .
\end{align*}
All these Jacobi forms are listed in table \ref{twining_generaM24}. The multiplicative seed is $Z^{1^4 2^2 4^4}(\tau,z)$, and this is a Jacobi form of $\Gamma_0(4)$. The Borcherds lift (multiplicative lift), given by Eq. \eqref{productformulamain} leads to the following product formula:
\begin{multline}\label{multiplicative4b1}
\Phi_3^{1^4 2^2 4^4} (\mathbf{Z}) = qrs\prod_{(n,\ell,m)>0} \left(1-q^n r^\ell s^m\right)^{c^0(nm,\ell)-c^2(nm,\ell)} \left(1-(q^n r^\ell s^m)^2\right)^{c^2(nm,\ell)-c^1(nm,\ell)}\\ \left(1-(q^n r^\ell s^m)^4\right)^{c^1(nm,\ell)} \ .
\end{multline}
Then using Eq. \eqref{inverseFJ} we get,
\begin{align}
c^0 = \frac14(f_0 +2 f_1+ f_2), \hspace{0.2cm} c^1 = \frac14(f_0 - f_2), \hspace{0.2cm} c^2 = \frac14(f_0 -2 f_1+ f_2)\ .
\end{align}
Using these we can rewrite the product formula \eqref{multiplicative4b1} as
\begin{multline}\label{multiplicative4b}
\Phi_3^{1^4 2^2 4^4} (\mathbf{Z}) = qrs\prod_{(n,\ell,m)>0} \left(1-q^n r^\ell s^m\right)^{f_1(nm,\ell)} \left(1-(q^n r^\ell s^m)^2\right)^{\frac12(f_2(nm,\ell)-f_1(nm,\ell))}\\
\left(1-(q^n r^\ell s^m)^4\right)^{\frac14(f_0(nm,\ell)-f_2(nm,\ell))} \ .
\end{multline}
To see if it can be written as the Borcherds product as stated in theorem \ref{CGproduct}, we need to compute the Fourier coefficients at different cusps. The cusps and other data for $\Gamma_0(4)$ are as follows:\footnote{Link to compute cusps of congruence subgroups $\Gamma_0(N)$: \url{http://magma.maths.usyd.edu.au/calc/} and the code is \\
$>$ G := CongruenceSubgroup(0,N);\\
$>$ Cusps(G);\\
$>$ Widths(G);} 
\begin{center}
\begin{tabular}{|c|c|c|c|}
\hline
$f/e$& $i\infty$ & $1/2$ & $0/1$\\
\hline
$h_e$ & 1&1& 4\\
\hline
$N_e$& 1& 2&4\\
\hline
\end{tabular}
\end{center}
Using $M_{i\infty} = 1, M_0 = S$, and $M_{1/p} = -ST^{-p}S$ and using the transformation rules under $S$ and $T$ as listed above in Section \ref{rulesofcusps}, we can show that
\begin{align*}
&\pi_{FE} \Big( Z^{1^4 2^2 4^4}|M_{\frac12} \Big) = Z^{1^8 2^8} - Z^{1^4 2^2 4^4},\\
&\pi_{FE} \Big( Z^{1^4 2^2 4^4}|M_{\frac01} \Big) = \frac14\left( Z^{1^{24}} - Z^{1^8 2^8}\right)\ .
\end{align*}
The Fourier expansion of the cusp about zero does not have integral power in $q$ and hence does not contribute to the product formula. Using these identities, we see that the Borcherds product as given by Clery-Gritsenko in \ref{CGproduct} takes the form same as Eq. \eqref{multiplicative4b}, i.e.,
\begin{align}
B_{Z^{1^4 2^2 4^4}}(\mathbf{Z})= \Phi_3^{1^4 2^2 4^4}(\mathbf{Z})\ .
\end{align}
The zeroth coefficient of the Siegel Modular Form, i.e., the $m=0$ term is given by the additive seed $\phi_{3,1}(\tau,z)= \frac{\theta_1(\tau,z)^2}{\eta(\tau)^6} \eta(\tau)^4 \eta(2\tau)^2\eta(4\tau)^4$. Hence $\Phi_3^{1^4 2^2 4^4}(\mathbf{Z})$ is a modular form of weight 3 at level 4.

\subsubsection{Cycle shape $\mathbf{1^2\cdot 2^2\cdot 3^2\cdot 6^2}$}

This is an element of order $6$. The twining genera that contribute to the multiplicative lifts are the following:
\begin{align*}
&Z^{\rho_0}(\tau,z)= Z^{1^{24}}(\tau,z),\\
&Z^{\rho} (\tau,z)=Z^{\rho_5} (\tau,z)=Z^{1^2 2^2 3^2 6^2}(\tau,z),\\
&Z^{\rho_2} (\tau,z)=Z^{\rho_4} (\tau,z)= Z^{1^6 3^6}(\tau,z),\\
&Z^{\rho_3}(\tau,z)= Z^{1^8 2^8}(\tau,z)\ .
\end{align*}
The multiplicative seed $Z^{1^2 2^2 3^2 6^2}(\tau,z)$ is a Jacobi form of $\Gamma_0(6)$. The cusps and other data for $\Gamma_0(6)$ are the following:
\begin{center}
\begin{tabular}{|c|c|c|c|c|}
\hline
$f/e$& $i\infty$ & $1/2$&$1/3$ & $0/1$\\
\hline
$h_e$ & 1&3&2& 6\\
\hline
$N_e$& 1&3&2& 6\\
\hline
\end{tabular}
\end{center}
The multiplicative lift given by Eq. \eqref{productformulamain} leads to the product formula,
\begin{multline}
\Phi_2^{1^2 2^2 3^2 6^2} (\mathbf{Z}) = qrs\prod_{(n,\ell,m)>0} \left(1-q^n r^\ell s^m\right)^{c^0(nm,\ell)+c^1(nm,\ell)-c^2(nm,\ell)-c^3(nm,\ell)} \\
\left(1-(q^n r^\ell s^m)^2\right)^{c^3(nm,\ell)-c^1(nm,\ell)}
\left(1-(q^n r^\ell s^m)^3\right)^{c^2(nm,\ell)-c^1(nm,\ell)} \left(1-(q^n r^\ell s^m)^6\right)^{c^1(nm,\ell)} \ .
\end{multline}
Then using Eq. \eqref{inverseFJ} we express the $c^\alpha$ in terms of the Fourier-Jacobi coefficients of $Z^{\rho_s}$ as follows:
\begin{align}
&c^0 = \frac16(f_0 +2 f_1+2 f_2+f_3), \hspace{0.2cm} c^1 = \frac16(f_0 +f_1- f_2-f_3),\nonumber\\
& c^2 = \frac16(f_0 - f_1-f_2+ f_3), \hspace{0.2cm} c^3 = \frac16(f_0 -2 f_1+ 2f_2 - f_3)\ .
\end{align}
Then the above product formula becomes:
\begin{multline}\label{multiplicative6a}
\Phi_2^{1^22^23^2 6^2} (\mathbf{Z}) = qrs\prod_{(n,\ell,m)>0} \left(1-q^n r^\ell s^m\right)^{f_1 (nm,\ell)} \left(1-(q^n r^\ell s^m)^2\right)^{\frac12(f_2(nm,\ell)-f_1(nm,\ell))}\\ \left(1-(q^n r^\ell s^m)^3\right)^{\frac13(f_3(nm,\ell)-f_1(nm,\ell))} \left(1-(q^n r^\ell s^m)^6\right)^{\frac16(f_0(nm,\ell)+f_1(nm,\ell)-f_2(nm,\ell)-f_3(nm,\ell))} \ .
\end{multline}
Now we need to check if it can be written as the Borcherds product as stated in theorem \ref{CGproduct}. For that we need to compute the Fourier-Jacobi coefficients at cusps $i\infty, 1/2, 1/3$ and $0/1$. Straightforward computation shows that
\begin{align*}
&\pi_{FE} \Big( Z^{1^2 2^2 3^2 6^2}|M_{\frac12} \Big) =\frac13\left( Z^{1^8 2^8} - Z^{1^2 2^2 3^2 6^2}\right),\\
&\pi_{FE} \Big( Z^{1^2 2^2 3^2 6^2}|M_{\frac13} \Big) =\frac12\left(Z^{1^6 3^6} - Z^{1^2 2^2 3^2 6^2} \right),\\
&\pi_{FE} \Big( Z^{1^2 2^2 3^2 6^2}|M_{\frac01} \Big) = \frac16\left( Z^{1^{24}} + Z^{1^2 2^2 3^2 6^2} -Z^{1^63^6} - Z^{1^8 2^8}\right)\ .
\end{align*}
Using these and using the product formula for Borcherds product given by Cl\'ery-Gritsenko, we get expression for $B_Z^{1^2 2^2 3^2 6^2}(\mathbf{Z})$ which is same as Eq. \eqref{multiplicative6a}, i.e.,
\begin{align}
B_{Z^{1^2 2^2 3^2 6^2}}(\mathbf{Z})= \Phi_2^{1^2 2^2 3^2 6^2}(\mathbf{Z})\ .
\end{align}
Hence $\Phi_2^{1^2 2^2 3^2 6^2}(\mathbf{Z})$ is a modular form of weight 2 and level 6. The zeroth coefficient of this Siegel Modular Form, i.e., the $m=0$ term is given by the additive seed $\phi_{2,1}(\tau,z) = \frac{\theta_1(\tau,z)^2}{\eta(\tau)^6} \eta(\tau)^2 \eta(2\tau)^2\eta(3\tau)^2\eta(6\tau)^2$.

\subsubsection{Cycle shape $\mathbf{1^2\cdot 2^1\cdot 4^1\cdot 8^2}$}

This is a $M_{24}$ element of order $8$ and the conjugacy class is $8a$. The twining genera that contribute to the multiplicative lift are the following:
\begin{align*}
&Z^{\rho_0} (\tau,z)= Z^{1^{24}}(\tau,z),\\
&Z^{\rho_1} (\tau,z)=Z^{\rho_3} (\tau,z)= Z^{\rho_5}(\tau,z)= Z^{\rho_7} (\tau,z)= Z^{1^2 2^1 4^1 8^2}(\tau,z),\\
&Z^{\rho_2} (\tau,z)=Z^{\rho_6} (\tau,z)= Z^{1^4 2^2 4^4}(\tau,z),\\
&Z^{\rho_4} (\tau,z)= Z^{1^8 2^8}(\tau,z)\ .
\end{align*}
The multiplicative seed is given by $Z^{1^2 2^1 4^1 8^2}$ and this is a Jacobi form of $\Gamma_0(8)$. The cusps at level $8$ are the following:
\begin{center}
\begin{tabular}{|c|c|c|c|c|}
\hline
$f/e$& $i\infty$ & $1/2$&$1/4$ & $0/1$\\
\hline
$h_e$ & 1&2&1& 8\\
\hline
$N_e$& 1& 4&2&8\\
\hline
\end{tabular}
\end{center}
The Borcherds lift given by Eq. \eqref{productformulamain} leads to the following product formula:
\begin{multline}
\Phi_1^{1^2 2^1 4^1 8^2} (\mathbf{Z}) = qrs\prod_{(n,\ell,m)>0} \left(1-q^n r^\ell s^m\right)^{c^0(nm,\ell)-c^4(nm,\ell)} \left(1-(q^n r^\ell s^m)^2\right)^{c^4(nm,\ell)-c^2(nm,\ell)} \\
\left(1-(q^n r^\ell s^m)^4\right)^{c^2(nm,\ell)-c^1(nm,\ell)} \left(1-(q^n r^\ell s^m)^8\right)^{c^1(nm,\ell)} \ .
\end{multline}
Then using Eq. \eqref{inverseFJ}, i.e.,
\begin{align}
&c^0 = \frac18(f_0 +4 f_1+2 f_2+f_4), \hspace{0.2cm} c^1 = \frac18(f_0 - f_4),\nonumber\\
& c^2 = \frac18(f_0 -2f_2+ f_4), \hspace{0.2cm} c^4 = \frac18(f_0 -4 f_1+ 2f_2 +f_4)\ ,
\end{align}
we rewrite the product formula in terms of the Fourier-Jacobi coefficients $f_a(n,\ell)$ as
\begin{multline}\label{multiplicative8a}
\Phi_1^{1^2 2^1 4^1 8^2} (\mathbf{Z}) = qrs\prod_{(n,\ell,m)>0} \left(1-q^n r^\ell s^m\right)^{f_1 (nm,\ell)} \left(1-(q^n r^\ell s^m)^2\right)^{\frac12(f_2(nm,\ell)-f_1(nm,\ell))}\\ \left(1-(q^n r^\ell s^m)^4\right)^{\frac14(f_4(nm,\ell)-f_2(nm,\ell))} \left(1-(q^n r^\ell s^m)^8\right)^{\frac18\left(f_0(nm,\ell)-f_4(nm,\ell)\right)} \ .
\end{multline}
To check if it can be written as the Borcherds product of Cl\'ery-Gritsenko \ref{CGproduct}, we need to compute the coefficients at cusps about $i\infty, 1/2,1/4$ and $0/1$. Computing $Z^{1^2 2^1 4^1 8^2}$ at these cusps one can show that
\begin{align*}
&\pi_{FE} \Big( Z^{1^2 2^1 4^1 8^2}|M_{\frac12} \Big) =\frac12\left( Z^{1^8 2^8} - Z^{1^4 2^2 4^4}\right),\\
&\pi_{FE} \Big( Z^{1^2 2^1 4^1 8^2}|M_{\frac14} \Big) =\left(Z^{1^4 2^2 4^4} - Z^{1^2 2^1 4^1 8^2} \right),\\
&\pi_{FE} \Big( Z^{1^2 2^1 4^1 8^2}|M_{\frac01} \Big) = \frac18\left( Z^{1^{24}} - Z^{1^8 2^8}\right)\ .
\end{align*}
Using these and using the product formula for Borcherds product given by Cl\'ery-Gritsenko, we get expression for $B_{Z^{1^2 2^1 4^1 8^2}}(\mathbf{Z})$ same as Eq. \eqref{multiplicative8a}, i.e.,
\begin{align}
B_{Z^{1^2 2^1 4^1 8^2}}(\mathbf{Z})= \Phi_1^{1^2 2^1 4^1 8^2}(\mathbf{Z})\ .
\end{align}
Hence $ \Phi_1^{1^2 2^1 4^1 8^2}(\mathbf{Z})$ is a meromorphic Siegel Modular Form of weight 1 at level 8. The zeroth coefficient of this Siegel Modular Form, i.e., the $m=0$ term is given by the additive seed $\phi_{1,1}(\tau,z)=\frac{\theta_1(\tau,z)^2}{\eta(\tau)^6} \eta(\tau)^2 \eta(2\tau)\eta(4\tau)\eta(8\tau)^2$.

\subsubsection{Cycle shape $\mathbf{1\cdot 2\cdot 7\cdot 14}$}

This is an element of order $14$. The following twining genera that contribute to the multiplicative lift :
\begin{align*}
&Z^{\rho_0}(\tau,z)= Z^{1^{24}}(\tau,z),\\
&Z^{\rho} (\tau,z)=Z^{\rho_3}(\tau,z)=Z^{\rho_5} (\tau,z)= Z^{\rho_9}(\tau,z)= Z^{\rho_{11}} (\tau,z)= Z^{\rho_{13}}(\tau,z)= Z^{1^1 2^1 7^1 14^1}(\tau,z),\\
&Z^{\rho_{2}} (\tau,z)= Z^{\rho_4} (\tau,z)= Z^{\rho_6}(\tau,z)= Z^{\rho_8} (\tau,z)= Z^{\rho_{10}} (\tau,z)= Z^{\rho_{12}} (\tau,z)= Z^{1^3 7^3}(\tau,z),\\
&Z^{\rho_7}(\tau,z)=Z^{1^8 2^8}(\tau,z)\ .
\end{align*}
The multiplicative seed for this case, i.e., $Z^{1^1 2^1 7^1 14^1}(\tau,z)$ is a Jacobi form of $\Gamma_0(14)$. The cusps and other data at level $14$ are the following:
\begin{center}
\begin{tabular}{|c|c|c|c|c|}
\hline
$f/e$& $i\infty$ & $1/2$&$1/7$ & $0/1$\\
\hline
$h_e$ & 1&7&2& 14\\
\hline
$N_e$& 1& 7&2&14\\
\hline
\end{tabular}
\end{center}
The Borcherds lift given by Eq. \eqref{productformulamain} leads to the following product formula:
\begin{multline}
\Phi_0^{1^1 2^1 7^1 14^1} (\mathbf{Z}) = qrs\prod_{(n,\ell,m)>0} \left(1-q^n r^\ell s^m\right)^{c^0(nm,\ell)+c^1(nm,\ell)-c^2(nm,\ell)-c^7(nm,\ell)}\\
\left(1-(q^n r^\ell s^m)^2\right)^{c^7(nm,\ell)-c^1(nm,\ell)} \left(1-(q^n r^\ell s^m)^7\right)^{c^2(nm,\ell)-c^1(nm,\ell)} \left(1-(q^n r^\ell s^m)^{14}\right)^{c^1(nm,\ell)} \ .
\end{multline}
Again we rewrite $c^{\alpha}(n,\ell)$ in terms of $f_a(n,\ell)$ using Eq. \eqref{inverseFJ},
\begin{align}
&c^0 = \frac1{14}(f_0 +6 f_1+6 f_2+f_7), \hspace{0.2cm} c^1 = \frac1{14}(f_0 +f_1-f_2- f_7),\nonumber\\
& c^2 = \frac1{14}(f_0 -f_1- f_2+ f_7), \hspace{0.2cm} c^7 = \frac1{14}(f_0 -6 f_1+ 6f_2 -f_7)\ .
\end{align}
Then the above product formula becomes
\begin{multline}\label{multiplicative14ab}
\Phi_0^{1^1 2^1 7^1 14^1} (\mathbf{Z}) = qrs\prod_{(n,\ell,m)>0} \left(1-q^n r^\ell s^m\right)^{f_1 (nm,\ell)} \left(1-(q^n r^\ell s^m)^2\right)^{\frac12(f_2(nm,\ell)-f_1(nm,\ell))}\\ \left(1-(q^n r^\ell s^m)^7\right)^{\frac17(f_7(nm,\ell)-f_1(nm,\ell))} \left(1-(q^n r^\ell s^m)^{14}\right)^{\frac1{14}\left(f_0(nm,\ell)+f_1(nm,\ell)-f_2(nm,\ell)-f_7(nm,\ell)\right)} \ .
\end{multline}
To see if it can be written as the Borcherds product as stated in theorem \ref{CGproduct} we need to compute the Fourier-Jacobi coefficients of the multiplicative seed at cusps about $i\infty, 1/2,1/7$ and $0/1$. Computing the cusps, one can show that
\begin{align*}
&\pi_{FE} \Big( Z^{1^1 2^1 7^1 14^1}|M_{\frac12} \Big) =\frac12\left( Z^{1^3 7^3} - Z^{1^1 2^1 7^1 14^1}\right),\\
&\pi_{FE} \Big( Z^{1^1 2^1 7^1 14^1}|M_{\frac17} \Big) =\frac17\left( Z^{1^8 2^8} - Z^{1^1 2^1 7^1 14^1} \right),\\
&\pi_{FE} \Big( Z^{1^1 2^1 7^1 14^1}|M_{\frac01} \Big) = \frac1{14}\left( Z^{1^{24}}+ Z^{1^1 2^1 7^1 14^1} - Z^{1^8 2^8}- Z^{1^3 7^3} \right)\ .
\end{align*}
Then the Borcherds product formula given by Cl\'ery-Gritsenko leads to product formula Eq. \eqref{multiplicative14ab}, i.e.,
\begin{align}
B_{Z^{1^1 2^1 7^1 14^1}}(\mathbf{Z})= \Phi_0^{1^1 2^1 7^1 14^1}(\mathbf{Z})\ .
\end{align}
Hence $ \Phi_0^{1^1 2^1 7^1 14^1}(\mathbf{Z})$ is a modular form of weight zero at level 14. The zeroth coefficient of the Siegel Modular Form, i.e., the $m=0$ term is given by the additive seed $\phi_{0,1}(\tau,z)= \frac{\theta_1(\tau,z)^2}{\eta(\tau)^6} \eta(\tau) \eta(2\tau)\eta(7\tau)\eta(14\tau)$.

\subsubsection{Cycle shape $\mathbf{1\cdot 3\cdot 5\cdot 15}$}

This is an $M_{24}$ element of order $15$. The twining genera that contribute to the multiplicative lift are the following:
\begin{align*}
&Z^{\rho_0}(\tau,z)= Z^{1^{24}}(\tau,z),\\
&Z^{\rho} (\tau,z)=Z^{\rho_2} (\tau,z)= Z^{\rho_4}(\tau,z)= Z^{\rho_7} (\tau,z),\\
&\hspace{0.5cm}=Z^{\rho_8} (\tau,z)= Z^{\rho_{11}} (\tau,z)= Z^{\rho_{13}} (\tau,z)= Z^{\rho_{14}} (\tau,z)= Z^{1^1 3^1 5^1 15^1}(\tau,z),\\
&Z^{\rho_3} (\tau,z)= Z^{\rho_6} (\tau,z)= Z^{\rho_9} (\tau,z)= Z^{\rho_{12}}(\tau,z)= Z^{1^4 5^4}(\tau,z),\\
&Z^{\rho_5} (\tau,z)= Z^{\rho_{10}} (\tau,z)= Z^{1^6 3^6}(\tau,z)\ .
\end{align*}
The multiplicative seed is a Jacobi form of $\Gamma_0(15)$. The cusps and other data for $\Gamma_0(15)$ are the following:
\begin{center}
\begin{tabular}{|c|c|c|c|c|}
\hline
$f/e$& $i\infty$ & $1/3$&$1/5$ & $0/1$\\
\hline
$h_e$ & 1&5&3& 15\\
\hline
$N_e$& 1& 5&3&15\\
\hline
\end{tabular}
\end{center}
The Borcherds lift given by Eq. \eqref{productformulamain} leads to the following product formula:
\begin{multline}
\Phi_0^{1^1 3^1 5^1 15^1} (\mathbf{Z}) = qrs\prod_{(n,\ell,m)>0} \left(1-q^n r^\ell s^m\right)^{c^0(nm,\ell)+c^1(nm,\ell)-c^3(nm,\ell)-c^5(nm,\ell)}\\
\left(1-(q^n r^\ell s^m)^3\right)^{c^5(nm,\ell)-c^1(nm,\ell)} \left(1-(q^n r^\ell s^m)^5\right)^{c^3(nm,\ell)-c^1(nm,\ell)} \left(1-(q^n r^\ell s^m)^{15}\right)^{c^1(nm,\ell)} \ .
\end{multline}
Then using Eq. \eqref{inverseFJ}, i.e.,
\begin{align}
&c^0 = \frac1{15}(f_0 +8 f_1+4 f_3+2f_5), \hspace{0.2cm} c^1 = \frac1{15}(f_0 +f_1-f_3- f_5),\nonumber\\
& c^3 = \frac1{15}(f_0 - 2f_1- f_3+2 f_5), \hspace{0.2cm} c^5 = \frac1{15}(f_0 - 4 f_1+ 4f_3 -f_5)\ ,
\end{align}
we can rewrite the above product formula as
\begin{multline}\label{multiplicative15ab}
\Phi_0^{1^1 3^1 5^1 15^1} (\mathbf{Z}) = qrs\prod_{(n,\ell,m)>0} \left(1-q^n r^\ell s^m\right)^{f_1 (nm,\ell)} \left(1-(q^n r^\ell s^m)^3\right)^{\frac13(f_3(nm,\ell)-f_1(nm,\ell))}\\ \left(1-(q^n r^\ell s^m)^5\right)^{\frac15(f_5(nm,\ell)-f_1(nm,\ell))} \left(1-(q^n r^\ell s^m)^{15}\right)^{\frac1{15}\left(f_0(nm,\ell)+f_1(nm,\ell)-f_3(nm,\ell)-f_5(nm,\ell)\right)} \ .
\end{multline}
To see if it can be written as the Borcherds product as stated in theorem \ref{CGproduct} we need to compute the Fourier coefficients at cusps about $i\infty, 1/3,1/5$ and $1/15$. The cusp about zero does not have any integer power in $q$ and hence does not contribute. Computing the seed, i.e., $Z^{1^1 3^1 5^1 15^1}(\tau,z)$ at these cusps one can show that
\begin{align*}
&\pi_{FE} \Big( Z^{1^1 3^1 5^1 15^1}|M_{\frac13} \Big) =\frac13\left( Z^{1^4 5^4} - Z^{1^1 3^1 5^1 15^1}\right),\\
&\pi_{FE} \Big( Z^{1^1 3^1 5^1 15^1}|M_{\frac15} \Big) =\frac15\left(Z^{1^6 3^6} - Z^{1^1 3^1 5^1 15^1} \right),\\
&\pi_{FE} \Big( Z^{1^1 3^1 5^1 15^1}|M_{\frac01} \Big) = \frac1{15}\left( Z^{1^{24}}+ Z^{1^1 3^1 5^1 15^1} - Z^{1^6 3^6}- Z^{1^4 5^4} \right)\ .
\end{align*}
Using these and using the product formula for Borcherds product given by Cl\'ery-Gritsenko, we get expression for $B_{Z^{1^1 3^1 5^1 15^1}}(\mathbf{Z})$ which is identical to
Eq. \eqref{multiplicative15ab}, i.e.,
\begin{align}
B_{Z^{1^1 3^1 5^1 15^1}}(\mathbf{Z})= \Phi_0^{1^1 3^1 5^1 15^1}(\mathbf{Z})\ .
\end{align}
Hence $ \Phi_0^{1^1 3^1 5^1 15^1}(\mathbf{Z})$ is a meromorphic Siegel Modular Form of weight zero at level 15. The zeroth coefficient of the Siegel Modular Form, i.e., the $m=0$ term is given by the additive seed $\phi_{0,1}(\tau,z) = \frac{\theta_1(\tau,z)^2}{\eta(\tau)^6} \eta(\tau) \eta(3\tau)\eta(5\tau)\eta(15\tau)$.

So we see that for all the type-I conjugacy classes of $M_{24}$, the length of the lowest cycle in the balanced cycle-shape is one and all the product formula obtained via the multiplicative lift matches with the Borcherds product formula given by Cl\'ery-Gritsenko.\\

\subsection{Type-II conjugacy classes}

\noindent Now we consider the cases for which there is no cycle shape of length one. The corresponding conjugacy classes are called type-II conjugacy classes.

\subsubsection{Cycle shape $\mathbf{2^{12}}$}

This is an element of order $2$. There are only two twining genera which contribute to the multiplicative lift. They are
\begin{align*}
&Z^{\rho_0}(\tau,z)= Z^{1^{24}}(\tau,z)\ ,\hspace{0.5cm} Z^{\rho} (\tau,z)= Z^{2^{12}}(\tau,z)\ .
\end{align*}
The multiplicative seed corresponding to this conjugacy class, i.e., $Z^{2^{12}}(\tau,z)$ is a Jacobi form of $\Gamma_0(4)$. The cusps and other data of $\Gamma_0(4)$ are the following:
\begin{center}
\begin{tabular}{|c|c|c|c|}
\hline
$f/e$& $i\infty$ & $1/2$ & $0/1$\\
\hline
$h_e$ & 1&1&4\\
\hline
$N_e$& 1& 2&4\\
\hline
\end{tabular}
\end{center}
The Borcherds lift given by Eq. \eqref{productformulamain} leads to the following product formula:
\begin{multline}
\Phi_4^{2^{12}} (\mathbf{Z}) = qrs\prod_{(n,\ell,m)>0} \left(1-q^n r^\ell s^m\right)^{c^0 (nm,\ell)- c^1(nm,\ell)} \left(1-(q^n r^\ell s^m)^2\right)^{c^1(nm,\ell)}\ .
\end{multline}
Then using Eq. \eqref{inverseFJ}, i.e.,
\begin{align}
&c^0 = \frac12(f_0+ f_1), \hspace{0.2cm} c^1 = \frac12(f_0 -f_1)\ ,
\end{align}
we can rewrite the above product formula as
\begin{align}\label{multiplicative2b}
\Phi_4^{2^{12}} = qrs \prod_{(n,\ell,m)>0} \Big( 1- q^n r^\ell s^m\Big)^{f_1(nm,\ell)} \Big( 1- (q^n r^\ell s^m)^2\Big)^{\frac12(f_0(nm.\ell)-f_1(nm,\ell))} \ .
\end{align}
To see if it can be written as the Borcherds product as stated in theorem \ref{CGproduct}, we need to compute the coefficients at the cusps $i\infty$ and $1/2$. The other cusps $0/1$ does not contribute as there are no integer power in $q$ in the Fourier expansion. Then we get
\begin{align*}
&\pi_{FE} \Big( Z^{2^{12}}|M_{\frac12} \Big) = - Z^{2^{12}} \ .
\end{align*}
Using this and the product formula for Borcherds product given by Cl\'ery- Gritsenko, we get
\begin{multline}
B_{Z^{2^{12}}}(\mathbf{Z})= \prod_{(n,\ell,m)>0} \Big(1- q^n r^\ell s^m\Big)^{f_1(nm,\ell)} \Big(1- (q^n r^\ell s^m)^2\Big)^{-\frac12 f_1(nm,\ell)} \ .
\end{multline}
This does not give all the terms that appear in the product form obtained from the Borcherds lift given in Eq. \eqref{multiplicative2b}. The missing terms can be accounted for by additional terms leading to
\begin{align}
\Big( \Phi_4^{2^{12}}(\mathbf{Z})\Big)^2 = B_{2Z^{2^{12}}}(\mathbf{Z}) B_{Z^{1^{24}}}(\mathbf{2Z})\ .
\end{align}
Therefore $\Big( \Phi_4^{2^{12}}(\mathbf{Z})\Big)^2 $ is a meromorphic Siegel Modular Form of weight $8$ at level $4$. The zeroth coefficient of the Siegel Modular Form, i.e., the $m=0$ term is given by the additive seed $\phi_{4,1}(\tau,z) = \frac{\theta_1(\tau,z)^2}{\eta(\tau)^6} \eta(2\tau)^{12}$.

\subsubsection{Cycle shape $\mathbf{2^4\cdot 4^4}$}

This $M_{24}$ element is of order $4$. The twining genera that contribute to the multiplicative lifts are,
\begin{align*}
&Z^{\rho_0} (\tau,z)= Z^{1^{24}}(\tau,z)\ ,\\
&Z^{\rho} (\tau,z)=Z^{\rho_3}(\tau,z)= Z^{2^4 4^4}(\tau,z)\ ,\\
&Z^{\rho_2} (\tau,z)= Z^{1^8 2^8}(\tau,z)\ .
\end{align*}
The multiplicative seed, i.e., $Z^{2^4 4^4}(\tau,z)$ is a Jacobi form of $\Gamma_0(8)$. The cusps and other data at level $8$ are the following:
\begin{center}
\begin{tabular}{|c|c|c|c|c|}
\hline
$f/e$& $i\infty$ & $1/2$&$1/4$ & $0/1$\\
\hline
$h_e$ & 1&2&1& 8\\
\hline
$N_e$& 1& 4&2&8\\
\hline
\end{tabular}
\end{center}
Following the computation similar to the case of cycle shape $1^4 2^2 4^4$, the Borcherds lift leads to the following product formula:
\begin{multline}\label{multiplicative4a}
\Phi_2^{2^4 4^4} (\mathbf{Z}) = qrs\prod_{(n,\ell,m)>0} \left(1-q^n r^\ell s^m\right)^{f_1(nm,\ell)} \left(1-(q^n r^\ell s^m)^2\right)^{\frac12(f_2(nm,\ell)-f_1(nm,\ell))}\\
\left(1-(q^n r^\ell s^m)^4\right)^{\frac14(f_0(nm,\ell)-f_2(nm,\ell))} \ .
\end{multline}
To see if it can be written as the Borcherds product as stated in theorem \ref{CGproduct} we need to compute the Fourier-Jacobi coefficients of the multiplicative seed at the cusps about $i\infty$ and $1/4$ as the Jacobi forms have integer coefficients for these cusps.
\begin{align*}
&\pi_{FE} \Big( Z^{ 2^4 4^4}|M_{\frac14} \Big) = -Z^{2^4 4^4}\ .
\end{align*}
Using this and the product formula for Borcherds product given by Cl\'ery- Gritsenko, we get
\begin{multline}
B_{Z^{2^4 4^4}}(\mathbf{Z})= \prod_{(n,\ell,m)>0} \Big(1- q^n r^\ell s^m\Big)^{f_1(nm,\ell)} \Big(1- (q^n r^\ell s^m)^2\Big)^{-\frac12 f_1(nm,\ell)} \ .
\end{multline}
This product formula does not give all the terms that appear in the product form obtained from the Borcherds lift given in Eq. \eqref{multiplicative4a}. The missing terms can be accounted for by additional terms leading to
\begin{align}\label{modular4a}
\Big( \Phi_2^{2^4 4^4}(\mathbf{Z})\Big)^2 = B_{2Z^{2^4 4^4}}(\mathbf{Z}) B_{Z^{1^8 2^8}}(2\mathbf{Z})\ .
\end{align}
Therefore $ \Big( \Phi_2^{2^4 4^4}(\mathbf{Z})\Big)^2$ is a meromorphic Siegel Modular Form of weight $4$ at level 8. The zeroth coefficient of the Siegel Modular Form, i.e., the $m=0$ term is given by the additive seed $\phi_{2,1}(\tau,z) = \frac{\theta_1(\tau,z)^2}{\eta(\tau)^6} \eta(2\tau)^4\eta(4\tau)^4$.

\subsubsection{Cycle shape $\mathbf{3^8}$}

This is an element of order $3$. The twining genera which contribute to the multiplicative lifts are
\begin{align*}
&Z^{\rho_0} (\tau,z)= Z^{1^{24}}(\tau,z),\\
&Z^{\rho} (\tau,z)= Z^{\rho_2}(\tau,z)= Z^{3^8}(\tau,z)\ .
\end{align*}
The corresponding multiplicative seed $Z^{3^8}(\tau,z)$ is a Jacobi form of $\Gamma_0(9)$. The cusps and other data for $\Gamma_0(9)$ are the following:
\begin{center}
\begin{tabular}{|c|c|c|c|c|}
\hline
$f/e$& $i\infty$ & $1/2$&$1/3$ & $0/1$\\
\hline
$h_e$ & 1&1&1&9\\
\hline
$N_e$& 1& 3&3&9\\
\hline
\end{tabular}
\end{center}
Cusps about $2/3$ can be computed by $ M_{\frac23} = ST^{-1}ST^{2}S$. The cusp about $0$ does not have integer power in $q$ and hence does not contribute. The Borcherds lift given leads to the product formula given by
\begin{align}\label{multiplicative3b}
\Phi_2^{3^8} = qrs \prod_{(n,\ell,m)>0} \Big( 1- q^n r^\ell s^m\Big)^{f_1(nm,\ell)} \Big( 1- (q^n r^\ell s^m)^3\Big)^{\frac13(f_0(nm.\ell)-f_1(nm,\ell)} \ .
\end{align}
To see if it can be written as the Borcherds product as stated in theorem \ref{CGproduct} we need to compute the coefficients at the cusps about $i\infty$, $1/3$ and $2/3$. The contributions from the cusps at $1/3$ and $2/3$ have cube roots of unity. However, the contributions of the two cusps add in the product formula to give integral coefficients. One gets
\begin{align*}
&\pi_{FE} \Big( Z^{ 3^8}|M_{\frac13} + Z^{3^8}|M_{\frac23} \Big) = - Z^{3^8}\ .
\end{align*}
Using this and the product formula for Borcherds product given by Cl\'ery- Gritsenko, we get
\begin{multline}
B_{Z^{3^8}}(\mathbf{Z})=\prod_{(n,\ell,m)>0} \Big(1- q^n r^\ell s^m\Big)^{f_1(nm,\ell)} \Big(1- (q^n r^\ell s^m)^3\Big)^{-\frac13 f_1(nm,\ell)} \ .
\end{multline}
This does not give all the terms that appear in the product form obtained from the Borcherds lift given in Eq. \eqref{multiplicative3b}. The missing terms can be accounted for by additional terms leading to
\begin{align}
\Big( \Phi_2^{3^8}(\mathbf{Z})\Big)^3 = B_{3Z^{4^6}}(\mathbf{Z}) B_{Z^{1^{24}}}(\mathbf{3Z})\ .
\end{align}
Therefore $\Big( \Phi_2^{3^8}(\mathbf{Z})\Big)^3$ is a meromorphic Siegel Modular Form of weight $6$ at level $9$. The zeroth coefficient of the Siegel Modular Form, i.e., the $m=0$ term is given by the additive seed $\phi_{2,1}(\tau,z) = \frac{\theta_1(\tau,z)^2}{\eta(\tau)^6} \eta(3\tau)^8$.

\subsubsection{Cycle shape $\mathbf{4^6}$}

This is an element of order $4$. The twining genera which contribute to the Siegel Modular Forms are
\begin{align*}
&Z^{\rho_0} (\tau,z)= Z^{1^{24}}(\tau,z),\\
&Z^{\rho} (\tau,z)= Z^{\rho_3}(\tau,z)= Z^{4^6}(\tau,z),\\
&Z^{\rho_2} (\tau,z)= Z^{ 2^{12}}(\tau,z)\ .
\end{align*}
The multiplicative seed $Z^{4^6}(\tau,z)$ is a Jacobi form of $\Gamma_0(16)$. The cusps and other data at level $16$ are the following:
\begin{center}
\begin{tabular}{|c|c|c|c|c|c|c|}
\hline
$f/e$& $i\infty$ & $1/2$&$1/4$ &3/4& 1/8 & $0/1$\\
\hline
$h_e$ & 1&4&1&1&1&16\\
\hline
$N_e$& 1& 8&4&4&2&16\\
\hline
\end{tabular}
\end{center}
Cusp about $3/4$ can be computed via following transformation:
$$M_{\frac34} = ST^{-1}ST^{2}S T^{-1}S\ .$$
Following the computation similar to the case of cycle shape $1^4 2^2 4^4$, the Borcherds lift leads to the following product formula:
\begin{multline}\label{multiplicative4c}
\Phi_1^{4^6} (\mathbf{Z}) = qrs\prod_{(n,\ell,m)>0} \left(1-q^n r^\ell s^m\right)^{f_1(nm,\ell)} \left(1-(q^n r^\ell s^m)^2\right)^{\frac12(f_2(nm,\ell)-f_1(nm,\ell))}\\
\left(1-(q^n r^\ell s^m)^4\right)^{\frac14(f_0(nm,\ell)-f_2(nm,\ell))} \ .
\end{multline}
 To see if it can be written as the Borcherds product as stated in theorem \ref{CGproduct}, we need to compute the coefficients at cusps about $i\infty$, ($1/4, 3/4$) and $1/8$. We need to pair up cusps with identical $(h_e, N_e)$. Computing these cusps, we get
$$ \Big( Z^{ 4^6}|M_{\frac14} \Big) = -i Z^{ 4^6}, \text{ and } \Big( Z^{ 4^6}|M_{\frac34} \Big) = i Z^{ 4^6}\ .$$
Then one obtains
\begin{align*}
&\pi_{FE} \Big( Z^{ 4^6}|M_{\frac14} + Z^{ 4^6}|M_{\frac34} \Big) = 0,\\
&\pi_{FE} \Big(Z^{ 4^6}|M_{\frac18} \Big) = -Z^{ 4^6}\ .
\end{align*}
Using these and the product formula for Borcherds product given by Cl\'ery- Gritsenko, we get
\begin{multline}
B_{Z^{4^6}}(\mathbf{Z})= \prod_{(n,\ell,m)>0} \Big(1- q^n r^\ell s^m\Big)^{f_1(nm,\ell)} \Big(1- (q^n r^\ell s^m)^2\Big)^{-\frac12 f_1(nm,\ell)} \ .
\end{multline}
This does not give all the terms that appear in the product form obtained from the Borcherds lift given in Eq. \eqref{multiplicative4c}. The missing terms can be accounted for by additional terms leading to
\begin{align}
\Big( \Phi_1^{4^6}(\mathbf{Z})\Big)^4 = B_{4Z^{4^6}}(\mathbf{Z}) B_{2Z^{ 2^{12}}}(2\mathbf{Z})B_{Z^{1^{24}}}(\mathbf{4Z})\ .
\end{align}
Therefore $\Big( \Phi_1^{4^6}(\mathbf{Z})\Big)^4$ is a modular form of weight $4$ at level $16$. The zeroth coefficient of the Siegel Modular Form, i.e., the $m=0$ term is given by the additive seed $\phi_{1,1}(\tau,z)= \frac{\theta_1(\tau,z)^2}{\eta(\tau)^6} \eta(4\tau)^6$.

\subsubsection{Cycle shape $\mathbf{6^4}$}

This is an element of order $6$ corresponding to the conjugacy class $6b$. The twining genera which contribute to the multiplicative lift are
\begin{align*}
&Z^{\rho_0} (\tau,z)= Z^{1^{24}}(\tau,z),\\
&Z^{\rho} (\tau,z)=Z^{\rho_5} (\tau,z)= Z^{6^4}(\tau,z),\\
&Z^{\rho_2} (\tau,z)=Z^{\rho_4} (\tau,z)= Z^{3^8}(\tau,z),\\
&Z^{\rho_3} (\tau,z)= Z^{2^{12}}(\tau,z)\ .
\end{align*}
The corresponding multiplicative seed, $ Z^{6^4}(\tau,z)$ is a Jacobi form of $\Gamma_0(36)$. The cusps and other data at level $36$ are the following:
\begin{center}
\begin{tabular}{|c|c|c|c|c|c|c|c|c|c|c|c|c|}
\hline
c& $i\infty$&0 & $1/2$ & $1/3$ & $2/3$ & $1/4$ & $1/6$ & $5/6$ & $1/9$ & $1/12$ & $5/12$ & $1/18$\\
\hline
$h_e$&1 & 36&9&4&4&9&1&1&4&1&1&1\\
\hline
$N_e$ & 1& 36&18&12&12&9&6&6&4&3&3&2\\
\hline
\end{tabular}
\end{center}
The cusps about $5/6$ and $5/12$ are respectively given by $M_{\frac56}= -ST^{-1}ST^5S$ and $M_{\frac5{12}} = ST^{-2}ST^2ST^{-2}S$. Following the computation similar to the cycle shape $1^22^23^26^2$, the multiplicative lift leads to the product formula given by
\begin{multline}\label{multiplicative6b}
\Phi_0^{6^4} (\mathbf{Z}) = qrs\prod_{(n,\ell,m)>0} \left(1-q^n r^\ell s^m\right)^{f_1 (nm,\ell)} \left(1-(q^n r^\ell s^m)^2\right)^{\frac12(f_2(nm,\ell)-f_1(nm,\ell))}\\ \left(1-(q^n r^\ell s^m)^3\right)^{\frac13(f_3(nm,\ell)-f_1(nm,\ell))} \left(1-(q^n r^\ell s^m)^6\right)^{\frac16(f_0(nm,\ell)+f_1(nm,\ell)-f_2(nm,\ell)-f_3(nm,\ell))} \ .
\end{multline}
Now to see if can be written as the Borcherds product as stated in theorem \ref{CGproduct} we need to compute the coefficients at cusps about $i\infty$, $(1/6,5/6)$, $(1/12,5/12)$ and $1/18$. We need to pair up cusps with identical values of $h_e$ and $N_e$ to get integral coefficients. Then from straightforward calculations one obtains
\begin{align*}
\pi_{FE} \Big( Z^{ 6^4}|M_{\frac16} + Z^{ 6^4}|M_{\frac56} \Big) &= Z^{ 6^4},\\
\pi_{FE} \Big( Z^{ 6^4}|M_{\frac1{12}} + Z^{ 6^4}|M_{\frac5{12}} \Big) &= - Z^{ 6^4}, \\
\pi_{FE} \Big( Z^{ 6^4}|M_{\frac1{18}}\Big)&=- Z^{ 6^4}\ .
\end{align*}
Using these and the product formula for Borcherds product given by Cl\'ery- Gritsenko, we get
\begin{multline}
B_{Z^{6^4}}(\mathbf{Z})= \prod_{(n,\ell,m)>0} \Big(1- q^n r^\ell s^m\Big)^{f_1(nm,\ell)} \Big(1- (q^n r^\ell s^m)^2\Big)^{-\frac12 f_1(nm,\ell)} \\
\Big(1- (q^n r^\ell s^m)^3\Big)^{-\frac13 f_1(nm,\ell)} \Big(1- (q^n r^\ell s^m)^6\Big)^{\frac16 f_1(nm,\ell)} \ .
\end{multline}
This does not give all the terms that appear in the product form given in Eq. \eqref{multiplicative6b}. The missing terms can be accounted for by additional terms leading to
\begin{align}
\Big( \Phi_0^{6^4}(\mathbf{Z})\Big)^6 = B_{6Z^{6^4}}(\mathbf{Z}) B_{3Z^{ 3^8}}(2\mathbf{Z}) B_{2Z^{ 2^{12}}}(3\mathbf{Z}) B_{Z^{1^{24}}}(6\mathbf{Z})\ .
\end{align}
$\Big( \Phi_0^{6^4}(\mathbf{Z})\Big)^6$ is a meromorphic Siegel Modular Form of weight zero at level 36. The zeroth coefficient of the Siegel Modular Form, i.e., the $m=0$ term is given by the additive seed $\phi_{0,1}(\tau,z) =\frac{\theta_1(\tau,z)^2}{\eta(\tau)^6} \eta(6\tau)^4$.

\subsubsection{Cycle shape $\mathbf{2^2\cdot 10^2}$}

This is an element of order $10$. The twining genera which contribute to the multiplicative lift are
\begin{align*}
&Z^{\rho_0}(\tau,z)= Z^{1^{24}}(\tau,z),\\
&Z^{\rho} (\tau,z)=Z^{\rho_3} (\tau,z)= Z^{\rho_7} (\tau,z)= Z^{\rho_9} (\tau,z)= Z^{2^2 10^2}(\tau,z),\\
&Z^{\rho_2} (\tau,z)=Z^{\rho_4}(\tau,z)=Z^{\rho_6}(\tau,z)=Z^{\rho_8}(\tau,z)=Z^{1^4 5^4}(\tau,z),\\
&Z^{\rho_5} (\tau,z)= Z^{2^{12}}(\tau,z)\ .
\end{align*}
The multiplicative seed, $Z^{2^2 10^2}(\tau,z)$ is a Jacobi form of $\Gamma_0(20)$. The cusps at level $20$ are the following:
\begin{center}
\begin{tabular}{|c|c|c|c|c|c|c|}
\hline
c& $i\infty$ & $1/2$ & $1/4$ & $1/5$ & $1/10$&0/1 \\
\hline
$h_e$&1 & 5&5&4&1&20\\
\hline
$N_e$ & 1& 10&5&4&2&20\\
\hline
\end{tabular}
\end{center}
The Borcherds lift given by Eq. \eqref{productformulamain} leads to the following product formula:
\begin{multline}
\Phi_0^{2^2 10^2} (\mathbf{Z}) = qrs\prod_{(n,\ell,m)>0} \left(1-q^n r^\ell s^m\right)^{c^0 (nm,\ell) + c^1(nm,\ell) - c^2(nm,\ell) - c^5(nm,\ell)} \\
\left(1-(q^n r^\ell s^m)^2\right)^{c^5(nm,\ell) - c^1(nm,\ell)} \left(1-(q^n r^\ell s^m)^5\right)^{c^2(nm,\ell) - c^1(nm,\ell)} \left(1-(q^n r^\ell s^m)^{10}\right)^{c^1(nm,\ell)}\ .
\end{multline}
Then using Eq. \eqref{inverseFJ}, i.e.,
\begin{align}
&c^0 = \frac1{10}(f_0+ 4 f_1+4f_2 +f_5), \hspace{0.2cm} c^1 = \frac1{10}(f_0+f_1 - f_2 -f_5)\ ,\\
&c^2 = \frac1{10}(f_0- f_1- f_2 +f_5), \hspace{0.2cm} c^5 = \frac1{10}(f_0- 4f_1 +4 f_2 -f_5)\ ,
\end{align}
we can rewrite the above product formula as
\begin{multline}\label{multiplicative10a}
\Phi_0^{2^210^2} (\mathbf{Z}) = qrs\prod_{(n,\ell,m)>0} \Big(1-q^n r^\ell s^m \Big)^{f_1(nm,\ell)} \Big(1-(q^n r^\ell s^m)^2 \Big)^{\frac12(f_2(nm,\ell)-f_1(nm,\ell))} \\ \Big(1-(q^n r^\ell s^m)^5 \Big)^{\frac15(f_5(nm,\ell)-f_1(nm,\ell))} \Big(1-(q^n r^\ell s^m)^{10} \Big)^{\frac1{10}(f_0(nm,\ell)+f_1(nm,\ell)- f_2(nm,\ell) - f_5(nm,\ell))}\ .
\end{multline}
Now to see if can be written as the Borcherds product as stated in theorem \ref{CGproduct} we need to compute the coefficients at the cusps about $i\infty$, $1/2$, $1/4$ and $1/10$ as other cusps do not have integral power in $q$.
Then from straightforward calculations one obtains
\begin{align*}
\pi_{FE} \Big( Z^{ 2^210^2}|M_{\frac12} \Big) &=\frac15\left( Z^{ 2^210^2} - Z^{ 2^{12}}\right),\\
\pi_{FE} \Big( Z^{ 2^210^2}|M_{\frac14} \Big) &=-\frac15\left( Z^{ 2^210^2} - Z^{ 2^{12}}\right),\\
\pi_{FE} \Big( Z^{2^210^2}|M_{\frac1{10}}\Big)&=- Z^{ 2^2 10^2}\ .
\end{align*}
Using these and the product formula for Borcherds product given by Cl\'ery- Gritsenko, we get
\begin{multline}
B_{Z^{ 2^2 10^2}}(\mathbf{Z})= \prod_{(n,\ell,m)>0} \Big(1- q^n r^\ell s^m\Big)^{f_1(nm,\ell)} \Big(1- (q^n r^\ell s^m)^2\Big)^{-\frac12 f_1(nm,\ell)} \\
\Big(1- (q^n r^\ell s^m)^5\Big)^{\frac15(f_5(nm,\ell)- f_1(nm,\ell)} \Big(1- (q^n r^\ell s^m)^{10}\Big)^{\frac1{10}(f_1(nm,\ell)-f_5(nm,\ell))} \ .
\end{multline}
This does not give all the terms that appear in the product form given in Eq. \eqref{multiplicative10a}. The missing terms can be accounted for by additional terms leading to
\begin{align}\label{modular10a}
\Big( \Phi_0^{ 2^2 10^2}(\mathbf{Z})\Big)^2 = B_{2Z^{ 2^2 10^2}}(\mathbf{Z}) B_{Z^{ 1^4 5^4}}(2\mathbf{Z}) \ .
\end{align}
$\Big( \Phi_0^{2^2 10^2}(\mathbf{Z})\Big)^2$ is a meromorphic Siegel Modular Form of weight zero at level 20. The zeroth coefficient of the Siegel Modular Form, i.e., the $m=1$ term is given by the additive seed $\phi_{0,1}(\tau,z) = \frac{\theta_1(\tau,z)^2}{\eta(\tau)^6} \eta(2\tau)^2 \eta(10\tau)^2$.

\subsubsection{Cycle shape $\mathbf{2\cdot 4\cdot 6 \cdot 12}$}

This is an element of order $12$. The twining genera which contribute to the multiplicative lift are
\begin{align*}
&Z^{\rho_0} (\tau,z)=Z^{1^{24}}(\tau,z),\\
&Z^{\rho} (\tau,z)=Z^{\rho_5} (\tau,z)= Z^{\rho_7} (\tau,z)= Z^{\rho_{11}} (\tau,z)= Z^{2^1 4^1 6^1 12^1} (\tau,z),\\
&Z^{\rho_2} (\tau,z)= Z^{\rho_{10}} (\tau,z)= Z^{1^2 2^2 3^2 6^2 }(\tau,z),\\
&Z^{\rho_3} (\tau,z)=Z^{\rho_9} (\tau,z)= Z^{2^4 4^4}(\tau,z),\\
&Z^{\rho_4} (\tau,z) =Z^{\rho_8} (\tau,z)= Z^{1^6 3^6},\\
&Z^{\rho_6} (\tau,z) = Z^{1^8 2^8}\ .
\end{align*}
The corresponding multiplicative seed, $Z^{2^1 4^1 6^1 12^1} (\tau,z)$ is a Jacobi form of $\Gamma_0(24)$. The cusps at level $24$ are the following:
\begin{center}
\begin{tabular}{|c|c|c|c|c|c|c|c|c|c|c|c|c|}
\hline
c& $i\infty$&0 & $1/2$ & $1/3$ & $1/4$ & $1/6$ & $1/8$ & $1/12$ \\
\hline
$h_e$&1 & 24& 6 &8&3&2&3&1\\
\hline
$N_e$ & 1& 24&12&8&6&4&3&2\\
\hline
\end{tabular}
\end{center}
The Borcherds lift given by Eq. \eqref{productformulamain} leads to the following product formula:
\begin{multline}
\Phi_0^{2^1 4^1 6^1 12^1} (\mathbf{Z}) = qrs\prod_{(n,\ell,m)>0} \left(1-q^n r^\ell s^m\right)^{c^0 (nm,\ell) + c^2(nm,\ell) - c^4(nm,\ell) - c^6(nm,\ell)} \\
\left(1-(q^n r^\ell s^m)^2\right)^{c^6(nm,\ell) - c^2(nm,\ell) - c^3(nm,\ell) + c^1(nm,\ell)}\left(1-(q^n r^\ell s^m)^3\right)^{c^4(nm,\ell) - c^2(nm,\ell)} \\
\left(1-(q^n r^\ell s^m)^4\right)^{c^3(nm,\ell)-c^1(nm,\ell)} \left(1-(q^n r^\ell s^m)^6\right)^{c^2(nm,\ell)-c^1(nm,\ell)}\left(1-(q^n r^\ell s^m)^{12}\right)^{c^1(nm,\ell)}\ .
\end{multline}
Then using Eq. \eqref{inverseFJ}, i.e.,
\begin{align*}
&c^0 = \frac1{12}(f_0+ 4 f_1 +2 f_2 +2 f_3 +2f_4 + f_6), \hspace{0.2cm} c^1 = \frac1{12}(f_0+f_2 -f_4 -f_6)\ ,\\
&c^2 = \frac1{12}(f_0+ 2 f_1 - f_2 -2 f_3 -f_4 + f_6), \hspace{0.2cm} c^3 = \frac1{12}(f_0- 2f_2 +2f_4 -f_6)\ ,\\
&c^6= \frac1{12}(f_0- 4 f_1 +2 f_2 -2 f_3 +2f_4 + f_6), \hspace{0.2cm} c^4 = \frac1{12}(f_0-2f_1-f_2+2f_3 -f_4 + f_6)\ ,
\end{align*}
we can rewrite the above product formula as
\begin{multline}\label{multiplicative12a}
\Phi^{2^1 4^1 6^1 12^1}_0 (\mathbf{Z}) = qrs\prod_{(n,\ell,m)>0} \Big(1-q^n r^\ell s^m \Big)^{f_1(nm,\ell)} \Big(1-(q^n r^\ell s^m)^2 \Big)^{\frac12(f_2(nm,\ell)-f_1(nm,\ell))} \\ \Big(1-(q^n r^\ell s^m)^3 \Big)^{\frac13(f_3(nm,\ell)-f_1(nm,\ell))} \Big(1-(q^n r^\ell s^m)^4\Big)^{\frac14 (- f_2(nm,\ell) + f_4(nm,\ell))}\\
\Big(1-(q^n r^\ell s^m)^6 \Big)^{\frac13(f_1(nm,\ell)-f_2(nm,\ell)-f_3(nm,\ell)+f_6(nm,\ell))}\\ \Big(1-(q^n r^\ell s^m)^{12}\Big)^{\frac1{12} (f_0(nm,\ell) +f_2(nm,\ell) - f_4(nm,\ell) - f_6(nm,\ell))}\ .
\end{multline}
Now to see if can be written as the Borcherds product as stated in theorem \ref{CGproduct} we need to compute the coefficients at different cusps.
Only the cusps about $1/4, 1/8$ and $1/12$ has integer power of $q$ in the Fourier expansion. The other cusps do not contribute. Then from straightforward calculations one obtains
\begin{align*}
\pi_{FE} \Big( Z^{2^1 4^1 6^1 12^1}|M_{\frac14} \Big) &= \frac13\left( Z^{2^1 4^1 6^1 12^1} - Z^{2^4 4^4}\right),\\
\pi_{FE} \Big( Z^{ 2^1 4^1 6^1 12^1}|M_{\frac18} \Big) &= - \frac13\left( Z^{2^1 4^1 6^1 12^1} - Z^{2^4 4^4}\right), \\
\pi_{FE} \Big( Z^{ 2^1 4^1 6^1 12^1}|M_{\frac1{12}}\Big)&=- Z^{ 2^1 4^1 6^1 12^1}\ .
\end{align*}
Using these and the product formula for Borcherds product given by Cl\'ery- Gritsenko, we get
\begin{multline}
B_{Z^{ 2^1 4^1 6^1 12^1}}(\mathbf{Z})=\prod_{(n,\ell,m)>0} \Big(1- q^n r^\ell s^m\Big)^{f_1(nm,\ell)} \Big(1- (q^n r^\ell s^m)^2\Big)^{-\frac12 f_1(nm,\ell)} \\
\Big(1- (q^n r^\ell s^m)^3\Big)^{-\frac13( f_1(nm,\ell) - f_3(nm,\ell) )} \Big(1- (q^n r^\ell s^m)^6\Big)^{\frac16( f_1(nm,\ell)- f_3(nm,\ell))} \ .
\end{multline}
This does not give all the terms that appear in the product form given in Eq. \eqref{multiplicative12a}. The missing terms can be accounted for by additional terms leading to
\begin{align}
\Big( \Phi_0^{ 2^1 4^1 6^1 12^1}(\mathbf{Z})\Big)^2 = B_{2Z^{ 2^1 4^1 6^1 12^1}}(\mathbf{Z}) B_{Z^{ 1^2 2^2 3^2 6^2}}(2\mathbf{Z}) \ .
\end{align}
$\Big( \Phi_0^{ 2^1 4^1 6^1 12^1}(\mathbf{Z})\Big)^2 $ is a meromorphic Siegel Modular Form of weight zero at level 24. The zeroth coefficient of the Siegel Modular Form, i.e., the $m=0$ term is given by the additive seed $\phi_{0,1}(\tau,z) = \frac{\theta_1(\tau,z)^2}{\eta(\tau)^6} \eta(2\tau) \eta(4\tau) \eta(6\tau) \eta(12\tau)$.

\subsubsection{Cycle shape $\mathbf{12^2}$}\label{Details12squared}

This is an element of order $12$. The twining genera which contribute to the multiplicative lift are
\begin{align*}
&Z^{\rho_0} (\tau,z)=Z^{1^{24}}(\tau,z),\\
&Z^{\rho} (\tau,z)=Z^{\rho_5} (\tau,z)=Z^{\rho_7} (\tau,z)=Z^{\rho_{11}}(\tau,z)= Z^{12^2} (\tau,z),\\
&Z^{\rho_2}(\tau,z)=Z^{\rho_{10}} (\tau,z)= Z^{6^4 }(\tau,z),\\
&Z^{\rho_3} (\tau,z)=Z^{\rho_9} (\tau,z)= Z^{ 4^6}(\tau,z),\\
&Z^{\rho_4}(\tau,z) =Z^{\rho_8}(\tau,z)= Z^{ 3^8},\\
&Z^{\rho_6}(\tau,z) = Z^{2^{12}}\ .
\end{align*}
The corresponding multiplicative seed, $Z^{12^2} (\tau,z)$ is a Jacobi form of $\Gamma_0(144)$. The cusps and other data at level $144$ are the following:
\begin{center}
\begin{tabular}{|c|c|c|c|c|c|c|c|c|c|c|c|c|c|c|c|}
\hline
c& $i\infty$&0 & $1/2$ & $1/3$&2/3 & $1/4$&3/4 & $1/6$&5/6 & $1/8$&1/9 & $1/12$&5/12 \\
\hline
$h_e$&1 & 144& 36 &16&16&9&9&4&4&9&16&1&1\\
\hline
$N_e$ & 1& 144&72&48&48&36&36&24&24&18&16&12&12\\
\hline
\end{tabular}
\end{center}
\begin{center}
\begin{tabular}{|c|c|c|c|c|c|c|c|c|c|c|c|c|c|c|c|}
\hline
c& $7/12$ & $11/12$&1/16 & $1/18$&1/24& $5/24$&1/36 & $7/36$&$1/48$ & $5/48$&1/72 \\
\hline
$h_e$&1 & 1& 9&4&1&1&1&1&1&1&1\\
\hline
$N_e$ & 12& 12&9&8&6&6&4&4&3&3&2\\
\hline
\end{tabular}
\end{center}

\subsubsection{Cycle shape $\mathbf{3\cdot 21}$}

This is an element of order $21$. The twining genera that contribute to the multiplicative lift are
\begin{align*}
&Z^{\rho_0} (\tau,z)= Z^{1^{24}}(\tau,z),\\
&Z^{\rho} (\tau,z)=Z^{\rho_2} (\tau,z) = Z^{\rho_4} (\tau,z)= Z^{\rho_5} (\tau,z)= Z^{\rho_8} (\tau,z)= Z^{\rho_{10}} (\tau,z)= Z^{\rho_{11}} (\tau,z)\\
&\hspace{1cm}= Z^{\rho_{13}} (\tau,z)= Z^{\rho_{16}} (\tau,z)= Z^{\rho_{17}} (\tau,z)= Z^{\rho_{19}} (\tau,z)= Z^{\rho_{20}} (\tau,z)= Z^{3^1 21^1}(\tau,z),\\
& Z^{\rho_3} (\tau,z) = Z^{\rho_6} (\tau,z) = Z^{\rho_9} (\tau,z) = Z^{\rho_{12}} (\tau,z) = Z^{\rho_{15}} (\tau,z) = Z^{\rho_{18}} (\tau,z) = Z^{1^3 7^3}(\tau,z),\\
& Z^{\rho_7} (\tau,z) = Z^{\rho_{14}} (\tau,z) )= Z^{3^8}(\tau,z)\ .
\end{align*}
In this case the multiplicative seed, $Z^{3^1 21^1}(\tau,z)$ is a Jacobi form of $\Gamma_0(63)$. The cusps and other data for $\Gamma_0(63)$ are the following:
\begin{center}
\begin{tabular}{|c|c|c|c|c|c|c|c|c|}
\hline
c& $i\infty$ & $1/3$ & $2/3$ & $1/7$ & $1/9$&1/21&2/21&0/1 \\
\hline
$h_e$&1 & 7&7&9&7&1&1&63\\
\hline
$N_e$ & 1& 21&21&9&7&3&3&63\\
\hline
\end{tabular}
\end{center}
The Borcherds lift given by Eq. \eqref{productformulamain} leads to the following product formula:
\begin{multline}
\Phi_{-1}^{3^1 21^1} (\mathbf{Z}) = qrs\prod_{(n,\ell,m)>0} \left(1-q^n r^\ell s^m\right)^{c^0 (nm,\ell) + c^1(nm,\ell) - c^3(nm,\ell) - c^7(nm,\ell)} \\
\left(1-(q^n r^\ell s^m)^3\right)^{c^7(nm,\ell) - c^1(nm,\ell)}
\left(1-(q^n r^\ell s^m)^7\right)^{c^3(nm,\ell) - c^1(nm,\ell)} \left(1-(q^n r^\ell s^m)^{21}\right)^{c^1(nm,\ell)}\ .
\end{multline}
Then using Eq. \eqref{inverseFJ}, i.e.,
\begin{align*}
&c^0 = \frac1{21}(f_0+ 12 f_1+ 6f_3 + 2f_7), \hspace{0.2cm} c^1 = \frac1{21}(f_0+f_1 - f_3 -f_7)\ ,\\
&c^3 = \frac1{21}(f_0-2 f_1- f_3+2f_7), \hspace{0.2cm} c^7 = \frac1{21}(f_0- 6f_1 +6 f_3 -f_7)\ ,
\end{align*}
we can rewrite the above product formula as
\begin{multline}\label{multiplicative21a}
\Phi_{-1}^{3^1 21^1} (\mathbf{Z}) = qrs\prod_{(n,\ell,m)>0} \Big(1-q^n r^\ell s^m \Big)^{f_1(nm,\ell)} \Big(1-(q^n r^\ell s^m)^3 \Big)^{\frac13(f_3(nm,\ell)-f_1(nm,\ell))} \\ \Big(1-(q^n r^\ell s^m)^7 \Big)^{\frac17(f_7(nm,\ell)-f_1(nm,\ell))} \Big(1-(q^n r^\ell s^m)^{21} \Big)^{\frac1{21}(f_0(nm,\ell)+f_1(nm,\ell)- f_3(nm,\ell) - f_7(nm,\ell))}\ .
\end{multline}
Now to see if can be written as the Borcherds product as stated in theorem \ref{CGproduct} we need to compute the coefficients at different cusps.
We need to consider the contributions from the cusps at $i\infty$, $(1/3,2/3)$, $1/9$, and $(1/21,2/21)$ as other cusps do not have integral power in $q$. Here again, we need to pair up cusps with identical $h_e$ and $N_e$ to obtain integral coefficients.
Then from straightforward calculations one obtains
\begin{align*}
\pi_{FE} \Big( Z^{3^121^1}|M_{\frac13} + Z^{3^121^1}|M_{\frac23} \Big) &=\frac17 (Z^{ 3^1 21^1} - Z^{ 3^8}), \\
\pi_{FE} \Big( Z^{3^121^1}|M_{\frac19} \Big) &=-\frac17 (Z^{ 3^1 21^1} - Z^{ 3^8}), \\
\pi_{FE} \Big( Z^{3^121^1}|M_{\frac1{21}} + Z^{3^121^1}|M_{\frac2{21}} \Big) &=-Z^{ 3^1 21^1}\ .
\end{align*}
Using these and the product formula for Borcherds product given by Cl\'ery- Gritsenko, we get
\begin{multline}
B_{Z^{ 3^1 21^1}}(\mathbf{Z})= \prod_{(n,\ell,m)>0} \Big(1- q^n r^\ell s^m\Big)^{f_1(nm,\ell)} \Big(1- (q^n r^\ell s^m)^2\Big)^{-\frac13 f_1(nm,\ell)} \\
\Big(1- (q^n r^\ell s^m)^7\Big)^{\frac17(f_7(nm,\ell)- f_1(nm,\ell))} \Big(1- (q^n r^\ell s^m)^{21}\Big)^{\frac1{21}(f_1(nm,\ell)-f_7(nm,\ell))} \ .
\end{multline}
This does not give all the terms that appear in the product form given in Eq. \eqref{multiplicative21a}. The missing terms can be accounted for by additional terms leading to
\begin{align}\label{modular21ab}
\Big( \Phi_{-1}^{3^1 21^1}(\mathbf{Z})\Big)^3= B_{3 Z^{ 3^1 21^1}}(\mathbf{Z}) B_{Z^{ 1^3 7^3}}(3\mathbf{Z})\ .
\end{align}
$\Big( \Phi_{-1}^{3^1 21^1}(\mathbf{Z})\Big)^3$ is a meromorphic Siegel Modular Form of weight $(-3)$ at level 63. The zeroth coefficient of the Siegel Modular Form, i.e., the $m=0$ term is given by the additive seed $\phi_{-1,1}(\tau,z) =\frac{\theta_1(\tau,z)^2}{\eta(\tau)^6} \eta(3\tau)\eta(21\tau)$.

Unlike the type-I case, none of the conjugacy class has a cycle shape with cycle length one. In all these cases, the multiplicative lift can be expressed as a product of multiple rescaled Borcherds products. This proves that the multiplicative lift can be used to construct a genus-two Siegel Modular Form for all the conjugacy classes of $M_{24}$, including those for which the additive construction is not known.

\bibliographystyle{utphys}
\bibliography{refs}

\providecommand{\href}[2]{#2}\begingroup\raggedright\begin{thebibliography}{10}

\bibitem{Dijkgraaf:1996it}
R.~Dijkgraaf, E.~P. Verlinde, and H.~L. Verlinde, ``{Counting dyons in
  $\mathcal{N}=4$ string theory},''
  \href{http://dx.doi.org/10.1016/S0550-3213(96)00640-2}{{\em Nucl. Phys.}
  {\bfseries B484} (1997) 543--561},
\href{http://arxiv.org/abs/hep-th/9607026}{{\ttfamily arXiv:hep-th/9607026
  [hep-th]}}.

\bibitem{Govindarajan:2008vi}
S.~Govindarajan and K.~Gopala~Krishna, ``{Generalized Kac-Moody Algebras from
  CHL dyons},'' \href{http://dx.doi.org/10.1088/1126-6708/2009/04/032}{{\em
  JHEP} {\bfseries 04} (2009) 032},
\href{http://arxiv.org/abs/0807.4451}{{\ttfamily arXiv:0807.4451 [hep-th]}}.

\bibitem{Conway1979}
J.~H. Conway and S.~P. Norton, ``Monstrous moonshine,''
  \href{http://dx.doi.org/10.1112/blms/11.3.308}{{\em Bulletin of the London
  Mathematical Society} {\bfseries 11} no.~3, 308--339}.

\bibitem{Dummit:1985}
D.~Dummit, H.~Kisilevsky, and J.~McKay, ``{Multiplicative products of
  $\eta$-functions},'' {\em Contemp. Math.} {\bfseries 045} (1985) 89--98.

\bibitem{Mason:1985}
G.~Mason, ``{$M_{24}$ and certain automorphic forms},'' {\em Contemp. Math.}
  {\bfseries 45} (1985) 89--98.

\bibitem{Eguchi:2010ej}
T.~Eguchi, H.~Ooguri, and Y.~Tachikawa, ``{Notes on the K3 Surface and the
  Mathieu group $M_{24}$},''
  \href{http://dx.doi.org/10.1080/10586458.2011.544585}{{\em Exper. Math.}
  {\bfseries 20} (2011) 91--96},
\href{http://arxiv.org/abs/1004.0956}{{\ttfamily arXiv:1004.0956 [hep-th]}}.

\bibitem{Cheng:2010pq}
M.~C.~N. Cheng, ``{K3 Surfaces, $\mathcal{N}=4$ Dyons, and the Mathieu Group
  M24},'' \href{http://dx.doi.org/10.4310/CNTP.2010.v4.n4.a2}{{\em Commun. Num.
  Theor. Phys.} {\bfseries 4} (2010) 623--658},
\href{http://arxiv.org/abs/1005.5415}{{\ttfamily arXiv:1005.5415 [hep-th]}}.

\bibitem{Gaberdiel:2010ch}
M.~R. Gaberdiel, S.~Hohenegger, and R.~Volpato, ``{Mathieu twining characters
  for K3},'' \href{http://dx.doi.org/10.1007/JHEP09(2010)058}{{\em JHEP}
  {\bfseries 09} (2010) 058},
\href{http://arxiv.org/abs/1006.0221}{{\ttfamily arXiv:1006.0221 [hep-th]}}.

\bibitem{Eguchi:2010fg}
T.~Eguchi and K.~Hikami, ``{Note on twisted elliptic genus of $K3$ surface},''
  \href{http://dx.doi.org/10.1016/j.physletb.2010.10.017}{{\em Phys. Lett.}
  {\bfseries B694} (2011) 446--455},
\href{http://arxiv.org/abs/1008.4924}{{\ttfamily arXiv:1008.4924 [hep-th]}}.

\bibitem{Gannon:2012ck}
T.~Gannon, ``{Much ado about Mathieu},''
  \href{http://dx.doi.org/10.1016/j.aim.2016.06.014}{{\em Adv. Math.}
  {\bfseries 301} (2016) 322--358},
\href{http://arxiv.org/abs/1211.5531}{{\ttfamily arXiv:1211.5531 [math.RT]}}.

\bibitem{Govindarajan:2011em}
S.~Govindarajan, ``{Unravelling Mathieu Moonshine},''
  \href{http://dx.doi.org/10.1016/j.nuclphysb.2012.07.005}{{\em Nucl. Phys.}
  {\bfseries B864} (2012) 823--839},
\href{http://arxiv.org/abs/1106.5715}{{\ttfamily arXiv:1106.5715 [hep-th]}}.

\bibitem{Persson:2013xpa}
D.~Persson and R.~Volpato, ``{Second Quantized Mathieu Moonshine},''
  \href{http://dx.doi.org/10.4310/CNTP.2014.v8.n3.a2}{{\em Commun. Num. Theor.
  Phys.} {\bfseries 08} (2014) 403--509},
  \href{http://arxiv.org/abs/1312.0622}{{\ttfamily arXiv:1312.0622 [hep-th]}}.

\bibitem{Govindarajan:2018ted}
S.~Govindarajan and S.~Samanta, ``{Two moonshines for $L_2(11)$ but none for
  $M_{12}$},'' \href{http://dx.doi.org/10.1016/j.nuclphysb.2019.01.004}{{\em
  Nucl. Phys.} {\bfseries B939} (2019) 566--598},
\href{http://arxiv.org/abs/1804.06677}{{\ttfamily arXiv:1804.06677 [hep-th]}}.

\bibitem{Govindarajan:2009qt}
S.~Govindarajan and K.~Gopala~Krishna, ``{BKM Lie superalgebras from dyon
  spectra in Z(N) CHL orbifolds for composite N},''
  \href{http://dx.doi.org/10.1007/JHEP05(2010)014}{{\em JHEP} {\bfseries 05}
  (2010) 014},
\href{http://arxiv.org/abs/0907.1410}{{\ttfamily arXiv:0907.1410 [hep-th]}}.

\bibitem{Govindarajan:2010fu}
S.~Govindarajan, ``{BKM Lie superalgebras from counting twisted CHL dyons},''
  \href{http://dx.doi.org/10.1007/JHEP05(2011)089}{{\em JHEP} {\bfseries 05}
  (2011) 089},
\href{http://arxiv.org/abs/1006.3472}{{\ttfamily arXiv:1006.3472 [hep-th]}}.

\bibitem{Govindarajan:2019ezd}
S.~Govindarajan and S.~Samanta, ``{BKM Lie superalgebras from counting twisted
  CHL dyons \textendash{} II},''
  \href{http://dx.doi.org/10.1016/j.nuclphysb.2019.114770}{{\em Nucl. Phys. B}
  {\bfseries 948} (2019) 114770},
  \href{http://arxiv.org/abs/1905.06083}{{\ttfamily arXiv:1905.06083
  [hep-th]}}.

\bibitem{Cheng:2011}
M.~C.~N. Cheng and J.~F.~R. Duncan, ``{On Rademacher Sums, the Largest Mathieu
  Group, and the Holographic Modularity of Moonshine},''
  \href{http://dx.doi.org/10.4310/CNTP.2012.v6.n3.a4}{{\em Commun. Num. Theor.
  Phys.} {\bfseries 6} (2012) 697--758},
\href{http://arxiv.org/abs/1110.3859}{{\ttfamily arXiv:1110.3859 [math.RT]}}.

\bibitem{Gritsenko:2008}
V.~{Gritsenko} and F.~{Clery}, ``{The Siegel modular forms of genus 2 with the
  simplest divisor},'' \href{http://arxiv.org/abs/0812.3962}{{\ttfamily
  arXiv:0812.3962 [math.NT]}}.

\bibitem{Eguchi:2011aj}
T.~Eguchi and K.~Hikami, ``{Twisted Elliptic Genus for K3 and Borcherds
  Product},'' \href{http://dx.doi.org/10.1007/s11005-012-0569-2}{{\em Lett.
  Math. Phys.} {\bfseries 102} (2012) 203--222},
\href{http://arxiv.org/abs/1112.5928}{{\ttfamily arXiv:1112.5928 [hep-th]}}.

\bibitem{GritsenkoNikulinI}
V.~A. {Gritsenko} and V.~V. {Nikulin}, ``{Automorphic Forms and Lorentzian
  Kac--Moody Algebras. Part I},''
  \href{http://arxiv.org/abs/alg-geom/9610022}{{\ttfamily
  arXiv:alg-geom/9610022 [math.AG]}}.

\bibitem{GritsenkoNikulinII}
V.~A. Gritsenko and V.~V. Nikulin, ``{Automorphic forms and Lorentzian
  Kac-Moody algebras. Part 2},''
\href{http://arxiv.org/abs/alg-geom/9611028}{{\ttfamily arXiv:alg-geom/9611028
  [alg-geom]}}.

\bibitem{Aoki:2005}
H.~Aoki and T.~Ibukiyama, ``Simple graded rings of siegel modular forms,
  differential operators and borcherds products,'' {\em International Journal
  of Mathematics} {\bfseries 16} (03, 2005) 249--279.

\bibitem{Raum:2012}
M.~Raum, ``{$M_{24}$}-twisted product expansions are {S}iegel modular forms,''
  \href{http://dx.doi.org/10.4310/CNTP.2013.v7.n3.a3}{{\em Commun. Number
  Theory Phys.} {\bfseries 7} no.~3, (2013) 469--495}.
  \url{https://doi.org/10.4310/CNTP.2013.v7.n3.a3}.

\bibitem{Atkin:1970}
A.~O.~L. Atkin and J.~Lehner, ``Hecke operators on $\gamma_0(m)$,'' {\em Math.
  Ann.} {\bfseries 185} (1970) 134--160.

\bibitem{Gaberdiel:2012gf}
M.~R. Gaberdiel, D.~Persson, H.~Ronellenfitsch, and R.~Volpato, ``{Generalized
  Mathieu Moonshine},''
  \href{http://dx.doi.org/10.4310/CNTP.2013.v7.n1.a5}{{\em Commun. Num. Theor
  Phys.} {\bfseries 07} (2013) 145--223},
\href{http://arxiv.org/abs/1211.7074}{{\ttfamily arXiv:1211.7074 [hep-th]}}.

\bibitem{Cheng:2012tq}
M.~C.~N. Cheng, J.~F.~R. Duncan, and J.~A. Harvey, ``{Umbral Moonshine},''
  \href{http://dx.doi.org/10.4310/CNTP.2014.v8.n2.a1}{{\em Commun. Num. Theor.
  Phys.} {\bfseries 08} (2014) 101--242},
\href{http://arxiv.org/abs/1204.2779}{{\ttfamily arXiv:1204.2779 [math.RT]}}.

\bibitem{Cheng:2013wca}
M.~C.~N. Cheng, J.~F.~R. Duncan, and J.~A. Harvey, ``{Umbral Moonshine and the
  Niemeier Lattices},''
\href{http://arxiv.org/abs/1307.5793}{{\ttfamily arXiv:1307.5793 [math.RT]}}.

\bibitem{Gaberdiel:2011fg}
M.~R. Gaberdiel, S.~Hohenegger, and R.~Volpato, ``{Symmetries of K3 sigma
  models},'' \href{http://dx.doi.org/10.4310/CNTP.2012.v6.n1.a1}{{\em Commun.
  Num. Theor. Phys.} {\bfseries 6} (2012) 1--50},
  \href{http://arxiv.org/abs/1106.4315}{{\ttfamily arXiv:1106.4315 [hep-th]}}.

\bibitem{Gaberdiel:2013psa}
M.~R. Gaberdiel, A.~Taormina, R.~Volpato, and K.~Wendland, ``{A K3 sigma model
  with $\mathbb{Z}^8_2$ : $\mathbb{M}_{20}$ symmetry},''
  \href{http://dx.doi.org/10.1007/JHEP02(2014)022}{{\em JHEP} {\bfseries 02}
  (2014) 022}, \href{http://arxiv.org/abs/1309.4127}{{\ttfamily arXiv:1309.4127
  [hep-th]}}.

\bibitem{Taormina:2011rr}
A.~Taormina and K.~Wendland, ``{The overarching finite symmetry group of Kummer
  surfaces in the Mathieu group $M_{24}$},''
  \href{http://dx.doi.org/10.1007/JHEP08(2013)125}{{\em JHEP} {\bfseries 08}
  (2013) 125}, \href{http://arxiv.org/abs/1107.3834}{{\ttfamily arXiv:1107.3834
  [hep-th]}}.

\bibitem{Clery2008}
F.~Cl\'ery and V.~Gritsenko, ``Siegel modular forms of genus 2 with the
  simplest divisor,'' \href{http://dx.doi.org/10.1112/plms/pdq036}{{\em
  Proceedings of the London Mathematical Society} {\bfseries 102} no.~6, (2011)
  1024--1052}.

\bibitem{Peter2000}
P.~{Niemann}, ``{Some Generalized Kac-Moody Algebras With Known Root
  Multiplicities},'' \href{http://arxiv.org/abs/math/0001029}{{\ttfamily
  arXiv:math/0001029 [math.QA]}}.

\end{thebibliography}\endgroup
\end{document}